\newcommand{\hide}[1]{}
\newcommand{\spara}[1]{\smallskip\noindent{\bf #1}}
\newtheorem{mydefinition}{Definition}
\newtheorem{mytheorem}{Theorem}
\newtheorem{problem}{Problem}
\newcommand{\NPhard}{$\mathbf{NP}$-hard}
\newcommand{\NPcomplete}{$\mathbf{NP}$-complete}
\newcommand{\squishlist}{
 \begin{list}{$\bullet$}
  {  \setlength{\itemsep}{0pt}
     \setlength{\parsep}{3pt}
     \setlength{\topsep}{3pt}
     \setlength{\partopsep}{0pt}
     \setlength{\leftmargin}{2em}
     \setlength{\labelwidth}{1.5em}
     \setlength{\labelsep}{0.5em}
} }
\newcommand{\squishlisttight}{
 \begin{list}{$\bullet$}
  { \setlength{\itemsep}{0pt}
    \setlength{\parsep}{0pt}
    \setlength{\topsep}{0pt}
    \setlength{\partopsep}{0pt}
    \setlength{\leftmargin}{2em}
    \setlength{\labelwidth}{1.5em}
    \setlength{\labelsep}{0.5em}
} }
\newcommand{\squishdesc}{
 \begin{list}{}
  {  \setlength{\itemsep}{0pt}
     \setlength{\parsep}{3pt}
     \setlength{\topsep}{3pt}
     \setlength{\partopsep}{0pt}
     \setlength{\leftmargin}{1em}
     \setlength{\labelwidth}{1.5em}
     \setlength{\labelsep}{0.5em}
} }
\newcommand{\squishend}{
  \end{list}
}
\newcommand{\twitter}{\textsf{Twitter}}
\newcommand{\lastfm}{\textsf{Last.fm}}
\newcommand{\flixster}{\textsf{Flixster}}
\newcommand{\us}{\textsf{PSC}}
\newcommand{\baseline}{\textsf{B}}
\newcommand{\E}{\ensuremath{\mathcal{E}}}
\newcommand{\A}{\ensuremath{\mathbb{O}}}
\newcommand{\D}{\ensuremath{\mathbb{D}}}
\newcommand{\DAG}{\textsc{dag}\xspace}
\newcommand{\DAGs}{\textsc{dag}s\xspace}
\newcommand{\agony}{agony}
\newcommand{\Probab}[1]{\mathcal{P}({#1})}
\newcommand{\Pcond}[2]{\Probab{{#1}\mid{#2}}}
\renewcommand{\~}[1]{\overline{#1}}
\newcommand{\minimaltopology}{\textsc{Minimal Causal Topology}\xspace}
\newcommand{\clusteringprob}{\mbox{\textsc{Agony-bounded Partitioning}}\xspace}
\newcommand{\twostepalg}{\mbox{\textsf{Two-step-Agony-Partitioning}}\xspace}
\newcommand{\samplingSubroutine}{\mbox{\textsf{Sample-Maximal-}\DAG-\textsf{set}}\xspace}
\newcommand{\setcover}{\mbox{\textsc{Set Cover}}\xspace}
\newcommand{\proposedalg}{\mbox{\textsf{Sampling-Agony-Partitioning}}\xspace}
\newcommand{\bth}{\boldsymbol{\theta}}
\begin{document}
\title[Probabilistic Causal Analysis of Social Influence]{Probabilistic Causal Analysis of Social Influence}

\author{Francesco Bonchi}
\affiliation{\small ISI Foundation, Italy}
\email{francesco.bonchi@isi.it}

\author{Francesco Gullo}
\affiliation{\small UniCredit, R\&D Dept., Italy}
\email{gullof@acm.org}

\author{Bud Mishra}
\affiliation{\small New York University, USA}
\email{mishra@nyu.edu}

\author{Daniele Ramazzotti}
\affiliation{\small Stanford University, CA, USA}
\email{daniele.ramazzotti@stanford.edu}

\renewcommand{\shortauthors}{F. Bonchi~\emph{et al.}}


\begin{abstract}
Mastering the dynamics of social influence requires separating, in a database of information propagation traces, the genuine causal processes from temporal correlation, i.e., homophily and other spurious causes. However, most studies to characterize social influence, and, in general, most data-science analyses focus on correlations, statistical independence, or conditional independence.
Only recently, there has been a resurgence of interest in ``causal data science,'' e.g., grounded on causality theories. In this paper we adopt a principled causal approach to the analysis of social influence from information-propagation data, rooted in the theory of probabilistic causation.

Our approach consists of two phases. In the first one, in order to avoid the pitfalls of misinterpreting causation when the data spans a mixture of several subtypes (``Simpson's paradox''), we partition the set of propagation traces into groups, in such a way that each group is as less contradictory as possible in terms of the hierarchical structure of information propagation. To achieve this goal, we borrow the notion of ``\emph{\agony}''~\cite{GupteSLMI} and define the \clusteringprob problem, which we prove being hard, and for which we develop two efficient algorithms with approximation guarantees. In the second phase, for each group from the first phase, we apply a constrained MLE approach to ultimately learn a minimal causal topology.
Experiments on synthetic data show that our method is able to retrieve the genuine causal arcs w.r.t. a ground-truth generative model. Experiments on real data show that, by focusing only on the extracted causal structures instead of the whole social graph, the effectiveness of predicting influence spread is significantly improved.
\end{abstract}
\maketitle
\sloppy

\vspace{3mm}
\section{Introduction}
\label{sec:intro}


Sophisticated empirical analyses of a variety of social phenomena have now become possible as a result of two related developments: the explosion of on-line social networks and the consequent unprecedented availability of network data.
Among the phenomena investigated, one that has attracted a lion's share of interest is the study of \emph{social influence}, i.e., the causal processes motivating the actions of a user to induce similar actions from her peers, creating a selective sweep of behavioral memes.
Mastering the dynamics of social influence (i.e., observing, understanding, and measuring it) could pave the way for many important applications, among which the most prominent is \emph{viral marketing}, i.e., the exploitation of  social influence by letting adoption of new products to hitch-hike on a viral meme sweeping through a social influence network~\cite{domingos01,kempe03,surveybonchi11}.
Moreover, patterns of influence can be taken as a proxy for trust and exploited in trust-propagation analysis~\cite{Guha04,Ziegler05,Golbeck06,Mohsen08} in large networks and in P2P systems.
Other applications include personalized recommendations~\cite{Song06,song07}, feed ranking in social networks~\cite{NectaRSS,memerank}, and the analysis of  information propagation in social media~\cite{weng2010, bakshy2011,CastilloMP11,RomeroMK11}.

Considerable attention has been devoted to the problem of estimating the strength of influence between two users in a social network~\cite{Saito08,amit2010,KutzkovBBG13}, mainly by recording how many times information successfully propagates from one vertex to the other. However, by social influence it is understood to mean a \emph{genuine causal process}, i.e., a force that an individual exerts on her peers to an extent to introduce change in opinions or behaviors. Thus, to properly deal with influence-induced viral sweeps, it does not suffice to just record temporally and spatially similar events, that could be just the result of correlation or other \emph{spurious} causes~\cite{aldrich1995correlations,pearl2009causality}.
Specifically, in the context of social influence the spurious causes concern common causes (e.g., assortative mixing  \cite{newman2003mixing} or ``homophily''), unobserved influence, and transitivity of influence~\cite{shalizi2011homophily}.
Other confounding factors (Simpson's paradox~\cite{wagner1982simpson}, temporal clustering~\cite{agarwal2002algorithmic}, screening-off~\cite{hitchcock1997probabilistic}, etc.) make the problem of separating genuine causes from spurious ones even harder.
The problem is intimately connected to the fundamental question in many theories of causality of misinterpreting causation when partitioning so-called genuine vs.~spurious causes, and it
 points to the need for a rigorous foundation, for example, the one developed by such prominent philosophers as Cartwright, Dupr\'e, Skyrms, or Suppes~\cite{hitchcock1997probabilistic}.

In this paper we tackle the problem of deriving, from a database of propagation traces (i.e., traces left by entities flowing in a social network), a set of directed acyclic graphs (\DAGs), each of which representing a genuine causal process underlying the network: multiple causal processes might involve \emph{different communities}, or represent \emph{the backbone of information propagation for different topics}.

Our approach builds upon \emph{Suppes' theory of probabilistic causation}~\cite{suppes_prima_facie}, a theory grounded on a logical foundation that incorporates time, logic and probability, having a long history in philosophy, and whose empirical effectiveness has been largely demonstrated in numerous contexts~\cite{caravagna2016algorithmic,bonchi2017exposing,gao2017efficient}.
Although exhibiting various known limitations, such a theory can be expressed in probabilistic computational tree logic with efficient model checkers that allow for devising efficient learning algorithms~\cite{kleinberg2010algorithmic}. 
All of this makes it particularly appealing given the computational burden of the problem we tackle in this work, and thus preferable to other more sophisticated yet computationally heavier theories, such as 
the one advocated by Judea Pearl and indirectly related to the philosophical foundations laid by Lewis and Reichenbach~\cite{pearl2009causality}.
The central notion behind Suppes' theory is \emph{prima facie causes}: to be recognized as a cause, a certain event must occur before the effect (\emph{temporal priority}) and must lead to an increase of the probability of observing the effect (\emph{probability raising}).
In the context of social influence the latter means that the most influential users in the network heavily influence the behavior of their social ties.

\spara{Challenges and contributions.}  
Suppes' and other similar notions of causality suffer from a well-known weakness of misinterpreting causation when the data spans a mixture of several subtypes, i.e., ``Simpson's paradox''~\cite{aldrich1995correlations}. This issue is predominant in information-propagation analysis, as users have different interests, and different topics produce different propagation traces, affecting very diverse populations. Hence, analyzing the social dynamics as a whole, by modeling the propagation traces altogether, will most likely end up in misinterpreting causation.
In this work we tackle this problem by partitioning the input propagation traces and assigning a different causal interpretation for each set of the identified partition.

More in detail, our goal is to partition the propagation traces into groups, in such a way that each group is as minimally contradictory as possible in terms of the hierarchical structure of information propagation. For this purpose, we borrow the notion of ``\emph{\agony}'', introduced by Gupte~\emph{et~al.}~\cite{GupteSLMI}, as a measure of how clearly defined a hierarchical  structure is in a directed graph.
We introduce the \clusteringprob problem, where the input propagation traces are partitioned into groups exhibiting small agony. We prove that \clusteringprob is \NPhard, and devise efficient algorithms with provable approximation guarantees.

For each group identified in the partitioning step,
the part of social network spanned by the union of all propagation traces in every such group may be interpreted as a \emph{prima-facie} graph, i.e., a graph representing all causal claims (of the causal process underlying that group) that are consistent with the Suppes' notion of prima facie causes.
As broadly discussed in the literature~\cite{hitchcock1997probabilistic}, prima facie causes may be either genuine or spurious.
Therefore, from the prima-facie graph we still need to filter out spurious claims.
We accomplish this second step of our approach by selecting the minimal set of arcs that are the most explanatory of the input propagation traces within the corresponding group.
We cast this problem as a constrained maximum likelihood estimation (MLE) problem. The result is a minimal causal structure (a \DAG) for each group, representing all the genuine causal claims of the social-influence causal process underlying that group.

We present an extensive experimental evaluation on synthetic data with a known ground-truth generative model, and show that our method achieves high accuracy in the task of identifying genuine causal arcs in the ground truth.
On real data we show that our approach can improve subsequent tasks that make use of social  influence, in particular in the task of predicting influence spread.

To summarize, the main contributions of this paper are: 
\squishlist

\item We adopt a principled causal approach to the analysis of social influence from information-propagation data, following Suppes' probabilistic causal theory.

\item We introduce the \clusteringprob problem, where the input set of propagations (\DAGs) is partitioned into groups exhibiting a clear hierarchical  structure. We prove the \NPhard{ness} of the problem and devise efficient algorithms with approximation guarantees. For each resulting group of propagation traces we apply a constrained MLE approach to ultimately learn a minimal causal topology.

\item Experiments on synthetic data show that our method is able to retrieve the genuine causal arcs with respect to a known ground-truth generative model. Experiments on real data show that, by focusing only on the causal structures extracted instead of the whole social network, we can improve the effectiveness of predicting influence spread.
\squishend
The rest of the paper is organized as follows.
Section~\ref{sec:related} overviews the related literature.
Section~\ref{sec:framework} discusses input data and background notions.
Section~\ref{sec:agony_bounded_clustering_problem} defines our problem, by formulating and theoretically characterizing the \clusteringprob problem, and discussing the minimal-causal-topology learning problem.
Section~\ref{sec:algorithms} describes the approximation algorithms for \clusteringprob.
Section~\ref{sec:experiments} presents the experimental evaluation, while Section~\ref{sec:conclusions} concludes the paper.

\section{Related Work}
\label{sec:related}

A large body of literature has focused on empirically analyzing the effects and the interplay of social influence and other factors of correlation, such as homophily.
Crandall \emph{et al.} \cite{CrandallCHKS08} present a study over Wikipedia editors' social network and LiveJournal blogspace, showing that there exists a feedback effect between users' similarity and social influence, and that combining features based on social ties and similarity is more predictive of future behavior than either social influence or similarity features alone.
Cha \emph{et al.} \cite{ChaMG09} analyze information dynamics on the Flickr social network and  provide empirical evidence that the social links are the dominant method of information propagation, accounting for more than 50\% of the spread of favorite-labeled pictures.
Leskovec \emph{et al.}~\cite{LeskovecSK06,LeskovecAH07} show patterns of influence by studying person-to-person recommendation for purchasing books and videos, while finding conditions under which such recommendations are successful. 
Hill \emph{et al.}~\cite{hill} analyze the adoption of a new telecommunications service and show that it is possible to predict with a significant confidence whether customers will sign up for a new calling plan once one of their phone contacts does so.

Additional effort has been devoted to methods for distinguishing genuine social influence from homophily and other external factors.
Anagnostopoulos \emph{et al.} \cite{aris08} devise techniques (e.g., \emph{shuffle test} and \emph{edge-reversal} test) to separate influence from correlation, showing that in Flickr, while there is substantial social correlation in tagging behavior, such correlation cannot be ascribed to influence.
Aral \emph{et al.} \cite{aral2009distinguishing} develop a matched sample
estimation framework, which accounts for homophily effects
as well as influence. 
Fond and Neville \cite{FondN10}  present a randomization technique
for temporal-network data where the attributes and links
change over time.
Sharma~and~Cosley~\cite{sharma2016distinguishing} propose a statistical procedure to discriminate between social influence and personal preferences in online activity feeds.

Our work distinguishes itself from this literature as it adopts a principled causal approach to the analysis of social influence from information-propagation data, rooted in probabilistic causal theory.
The idea of adopting Suppes' theory to infer causal structures and represent them into graphical models is not new, but it has been used in completely different contexts, such as cancer-progression analysis~\cite{caprese_causation,capri_causation}, discrimination detection in databases~\cite{bonchi2017exposing}, and financial stress-testing scenarios~\cite{gao2017efficient}.

\section{Preliminaries}
\label{sec:framework}

\spara{Input data.}
The data we take as input in this work consists of: ($i$) a directed graph $G=(V,A)$ representing a network of interconnected objects and hereinafter informally referred to as the ``social graph'', ($ii$) a set $\E$ of \emph{entities}, and ($iii$) a set $\A$ of \emph{observations} involving the objects of the network and the entities in $\E$. Each observation in $\A$ is a triple $\langle v, \phi, t \rangle$, where $v \in V$, $\phi \in \E$, and $t \in \mathbb{N}^+$, denoting that the entity $\phi$ is observed at node $v$ at time $t$.
For instance, $G$ may represent users of a social network interconnected via a follower-followee relation, entities in $\E$ may correspond to pieces of multimedia content (e.g., photos, videos), and an observation $\langle v, \phi, t \rangle \in \A$ may refer to the event that the multimedia item $\phi$ has been enjoyed by user $v$ at time $t$.      
We assume an entity cannot be observed multiple times at the same node; should this happen, we consider only the first one (in order of time) of such observations. 

The set $\A$ of observations can alternatively be viewed as a database $\D$ of \emph{propagation traces} (or simply \emph{propagations}), i.e., traces left by entities ``flowing'' over $G$. Formally, a propagation trace of an entity $\phi$ corresponds to the subset $\{\langle v, \phi', t\rangle \in \A  \mid \phi' = \phi\}$ of all observations in $\A$ involving that entity. 
Coupled with the graph $G$, the database of propagations corresponds to a set $\D = \{D_{\phi} \mid \phi \in \E\}$ of \emph{directed acyclic graphs} (\DAG{s}), where, for each $\phi \in \E$, $D_{\phi} = (V_{\phi}, A_{\phi})$, $V_{\phi} = \{v \in V \mid \langle v,\phi,t \rangle \in \A\}$, $A_{\phi} = \{(u,v) \in A \mid \langle u,\phi,t_u \rangle \in \A, \langle v,\phi,t_v \rangle \in \A, t_u < t_v\}$. Note that each $D_{\phi} \in \D$  contains no cycles, due to the assumption of time irreversibility. 
In the remainder of the paper we will refer to $\D$ \emph{as a database of propagations} or \emph{as a set of} \DAG{s} interchangeably.  
Also, we assume that each propagation is started at time 0 by a dummy node $\Omega \notin V$, representing a source of information external to the network that is implicitly connected to all nodes in $V$. 
An example of our input is provided in Figure~\ref{fig:example1}. 
Given a set $\mathcal{D} \subseteq \D$ of propagations, $G(\mathcal{D})$ denotes the union graph of all \DAGs\ in $\mathcal{D}$, where the union of two graphs $G_1 = (V_1,A_1)$ and $G_2 = (V_2,A_2)$ is  $G_1 \cup G_2 =  (V_1 \cup V_2 \setminus \{\Omega\}, \{(u,v) \in A_1 \cup A_2 \mid u \neq \Omega, v \neq \Omega\})$.\footnote{For the sake of presentation of the technical details, we assume union graphs not containing the dummy node $\Omega$.}
Note that, although $\mathcal{D}$ is a set of \DAGs, $G(\mathcal{D})$ is \emph{not necessarily} a \DAG.


\begin{figure}[t!]
\vspace{-0mm}
\begin{tabular}{cc}
\hspace{-4mm}
\begin{small}
\begin{tabular}{|c c c|}
\multicolumn{3}{c}{\D}\\
\hline
$v$ & $\phi$ & $t$ \\ \hline
$\Omega$ & $\phi_1$ & 0 \\
$v_2$ & $\phi_1$ & 2 \\
$v_3$ & $\phi_1$ & 4 \\
$v_4$ & $\phi_1$ & 5 \\
$v_5$  & $\phi_1$ & 7 \\ \hline
$\Omega$ & $\phi_2$ & 0 \\
$v_2$ & $\phi_2$ & 1 \\
$v_1$ & $\phi_2$ & 3 \\
$v_5$ & $\phi_2$ & 6\\
$v_7$ & $\phi_2$ & 7\\
$v_6$  & $\phi_2$ & 8\\
$v_3$ & $\phi_2$ & 9\\\hline
$\Omega$ & $\phi_3$ & 0 \\
$v_1$ & $\phi_3$ & 1 \\
$v_2$ & $\phi_3$ & 3 \\
$v_6$ & $\phi_3$ & 5\\
$v_7$  & $\phi_3$ & 7\\
$v_4$  & $\phi_3$ & 8\\ \hline
\end{tabular}
\end{small}
&
\hspace{-6mm}
\begin{tabular}{cc}
$\qquad \qquad G$ & $\qquad D_{\phi_1}$\\
\multicolumn{2}{c}{\includegraphics[width=.35\textwidth]{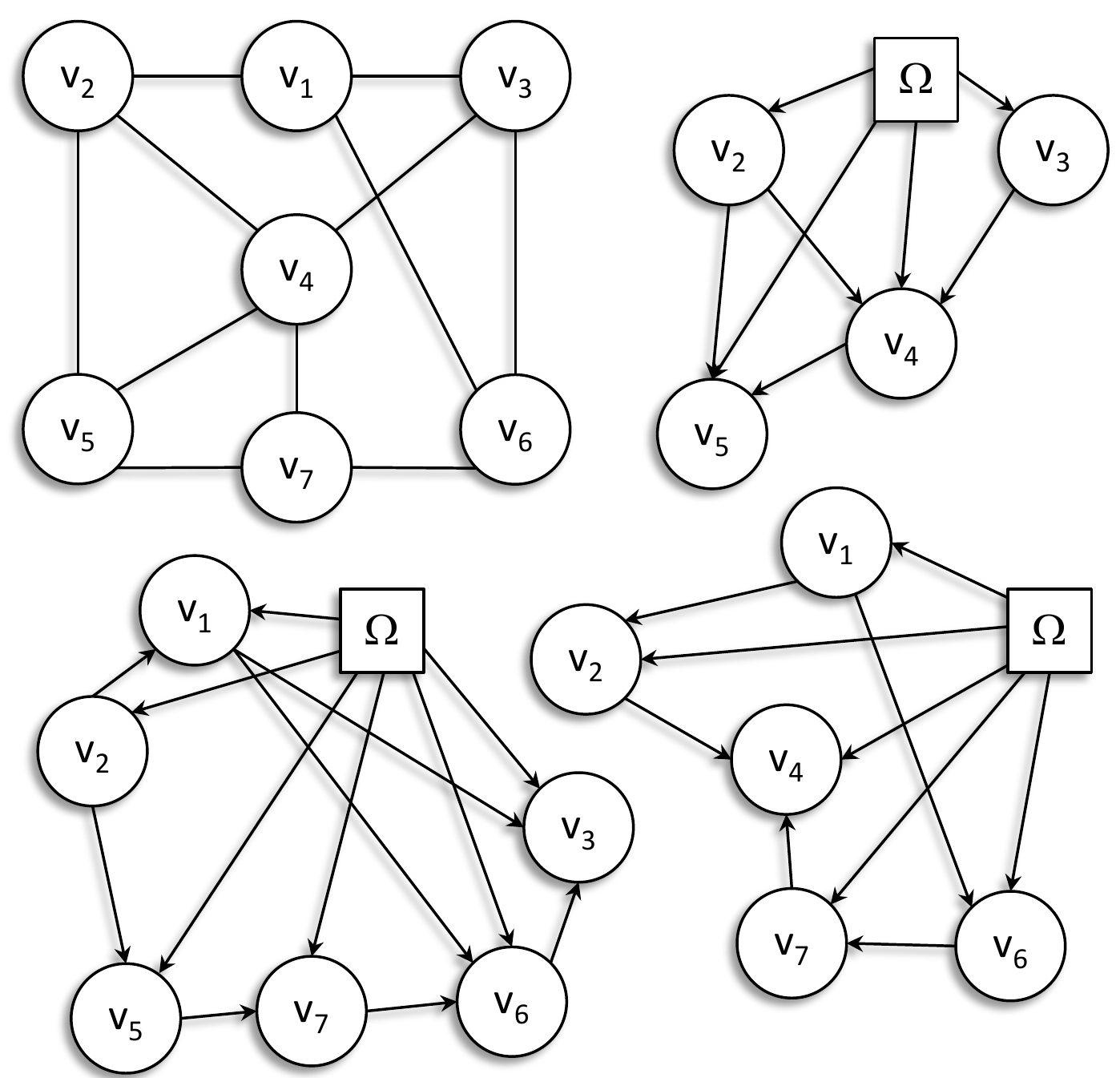}}\\
$\qquad \qquad  D_{\phi_2}$ &$\qquad D_{\phi_3}$\\
\end{tabular}
\end{tabular}
\vspace{-2mm}
\captionsetup{justification=justified,singlelinecheck=off,font={stretch=0.6}}
\caption{\em \small An example of the input of our problem: a social graph $G$, and a database of propagation traces \D\ defined over a set of entities $\E = \{\phi_1, \phi_2, \phi_3\}$. Here $G$ is represented undirected: each edge corresponds to the two directed arcs. Each propagation is started at time 0 by a dummy node $\Omega \notin V$. Given the graph  $G$, the propagation database \D\ is equivalent to the set $\{D_{\phi_1},D_{\phi_2},D_{\phi_3}\}$ of \DAGs. \label{fig:example1}}
\vspace{-2mm}
\end{figure}

\spara{Hierarchical structure.}
As better explained in the next section, in our approach we borrow the notion of ``\emph{\agony}'' introduced by Gupte et al.~\cite{GupteSLMI} to reconstruct a proper hierarchical structure of a directed graph $G = (V,A)$.
Such a notion is defined as follows.
Consider a ranking function $r: V \rightarrow \mathbb{N}$ for the nodes in $V$, such that the inequality $r(u) < r(v)$ expresses the fact that $u$ is ``higher'' in the hierarchy than $v$, i.e., the smaller $r(u)$ is, the more $u$ is an ``early-adopter''. If $r(u) < r(v)$, then the arc $u \rightarrow v$ is expected and does not result in any ``social agony''. If, instead, $r(u) \geq r(v)$ the arc $u \rightarrow v$ leads to agony, because it means that $u$ has a follower $v$ (in the social-graph terminology) that is higher-ranked than $u$ itself. Therefore, given a graph $G$ and a ranking $r$, the \agony\ of each arc $(u,v)$ is defined as $\max\{r(u)-r(v)+1,0\}$, and the \agony\ $a(G,r)$ of the whole graph for a given ranking $r$ is the sum over all arcs:
$
a(G,r) = \sum_{(u,v) \in A}  \max\{r(u)-r(v)+1,0\}.
$
In most contexts (as in ours) the ranking $r$ is not explicitly provided. The objective thus becomes finding a ranking that minimizes the \agony\ of the graph. This way, one can compute the \agony\ of any graph $G$ as
$
a(G) = \min_r a(G,r).
$
As a \DAG\ implicitly induces a partial order over its nodes, it has zero \agony: the nodes of a \DAG\ form a perfect hierarchy. For instance, in the \DAGs\  $D_{\phi_1},D_{\phi_2},D_{\phi_3}$ in Figure~\ref{fig:example1}, it is sufficient to take the temporal ordering as a ranking to get no \agony, i.e., $r(u) = t_u$, where $\langle u,\phi_i,t_u \rangle \in D_{\phi_i}$.
On the other hand, merging several \DAGs\ leads to a graph that is not necessarily a \DAG, and can thus have non-zero \agony. 
In fact, a cycle of  length $k$ (and not overlapping with other cycles) incurs \agony\ equal to $k$~\cite{GupteSLMI}.
A ranking $r$ yielding minimum agony for the union graph $D_{\phi_1} \cup D_{\phi_2}$ of the \DAGs $D_{\phi_1}$ and $D_{\phi_2}$ in Figure~\ref{fig:example1} is $(v_2:0)(v_1:1)(v_4:2)(v_5:3)(v_7:4)(v_6:5)(v_3:6)$.
This ranking yields no \agony\ on all the arcs, except for arc $v_3 \rightarrow v_4$, which incurs \agony\ equal to the length of the cycle involving $v_3$ and $v_4$, i.e., 6-2+1 = 5.

Gupte~et~al.~\cite{GupteSLMI} provide a polynomial-time algorithm for finding a ranking of minimum \agony, running in $\mathcal{O}(|V| \times |A|^2)$ time.
Tatti~\cite{Tatti14} devises a more efficient method, which takes $\mathcal{O}(|A|^2)$ time in the worst case, but recognized as much faster in practice.
In another work~\cite{Tatti17} Tatti also provides methods to compute \agony\ on weighted graphs and when cardinality constraints are specified. 

\section{Problem statement}
\label{sec:agony_bounded_clustering_problem}

The main goal of this work consists in  deriving \emph{a set of causal \DAGs that are well-representative of the social-influence dynamics underlying an input database of propagation traces}.
The idea is that every derived \DAG represents a  \emph{causal process} underlying the input data, where an arc from $u$ to $v$ models the situation where \emph{actions of user $u$ are likely to cause actions of user $v$}.
The various causal processes may correspond to
the \emph{different topics} on which users exert \emph{social influence} on each other.
For instance, a causal \DAG\ concerning politics may contain an arc $u \rightarrow v$ meaning that user $u$ is influential on user $v$ for politics-related matters, while a \DAG\ for the music context may miss that influence relation or have~it~inverted.

To achieve our goal, we resort to the theory of \emph{probabilistic causation}~\cite{hitchcock1997probabilistic}, which was introduced by Suppes in~\cite{suppes_prima_facie}, via the notion of \emph{prima facie causes}.

\begin{mydefinition}[Prima facie causes~\cite{suppes_prima_facie}] \label{def:praising}
For any two events $c$ (cause) and $e$ (effect), occurring respectively at times $t_c$ and $t_e$,
under the  mild assumption that the probabilities $\Probab{c}$ and $\Probab{e}$ of the two events satisfy the condition $0 < \Probab{c}, \Probab{e} < 1$,
the event $c$ is called a \emph{prima facie cause} of the event $e$ if it occurs \emph{before} $e$ and \emph{raises the probability} of $e$, i.e., $t_c < t_e \ \wedge \ \Pcond{e}{c} > \Pcond{e}{\~ c}$.
\end{mydefinition}

While being a well-acknowledged definition of causality, Suppes' prima facie causes suffer from three main limitations:
($i$) If the input data spans multiple causal processes, causal claims may be hidden or misinterpreted if one looks at the data as a whole: this anomaly is  the well-known Simpson's paradox \cite{wagner1982simpson} .
($ii$) Suppes' definition
lacks any characterization in terms of \emph{spatial proximity}, which is a critical concept in social influence, as users who never interact with each other should intuitively not be involved in a causal relation.
($iii$) As discussed in the Introduction, prima facie causes may be either \emph{genuine} or \emph{spurious}.
The desideratum is that only the former ones are detected and  presented~as~output.


Motivated by the above discussion, we propose to derive the desired social-influence causal \DAGs with a two-step methodology, where the successive execution of the two steps ultimately output social-influence causal \DAGs that are consistent with the principles of Suppes' theory, while at the same time overcoming its limitations.
Specifically, we accomplish the two steps by formulating and solving two problems:
\begin{itemize}
\item
\clusteringprob (Section~\ref{sec:subproblem1}), a novel combinatorial-optimization problem mainly designed to get rid of the Simpson's paradox, where the input propagation set is partitioned into homogeneous groups. Formulating, theoretically characterizing, and solving this problem constitute the main technical contribution of this work.
\item
\minimaltopology (Section~\ref{sec:subproblem2}), a learning problem where a minimal causal \DAG is derived from the union graph of each group of propagations identified in the first step. The main goal here is to remove all the spurious relationships.
\end{itemize}

\subsection{Partitioning the propagation set}
\label{sec:subproblem1}
The input propagation set is typically so large that it may easily span multiple causal processes, each of which corresponds to a different social-influence topic.
Due to the aforementioned Simpson's paradox, attempting to infer causality from all input propagations at once is therefore bound to fail.
For this purpose,  we aim at preventively partitioning the input set of propagations into \emph{homogeneous} groups, each of which is likely to identify a single causal process.
Homogeneity means that the propagations in a group should exhibit as few ``contradictions'' as possible, where a contradiction arises when a user $u$ is a (direct or indirect) influencer for user $v$ in some propagations, while the other way around holds in some other propagations.
In other words, a homogeneous  group of propagations should be such that the union graph of all those propagations has a clear \emph{hierarchical structure}, i.e., it is as similar as possible to a \DAG.
\emph{This requirement is fully reflected in the notion of agony} discussed in Section~\ref{sec:framework}.
For this reason, here we resort to that notion and define the \clusteringprob problem:
given a threshold  $\eta \in \mathbb{N}$, partition the input propagations into the minimum number of sets such that the union graph of every of such sets has agony no more than $\eta$.
Additionally, we require for each set to be limited in size and to exhibit a union graph that is (weakly) connected, as too large propagation sets or disconnected union graphs are unlikely to represent single causal processes.

\begin{problem}[\clusteringprob]\label{prob:clustering}
Given a set $\D$ of \DAGs  and two positive integers $K, \eta \in \mathbb{N}$, find a partition $\mathbf{D}^* \in \mathcal{P}(\D)$ (where $\mathcal{P}(\cdot)$ denotes the set of all partitions of a given set) such that
\begin{eqnarray}
\mathbf{D}^* & = & \operatorname{argmin}_{\mathbf{D} \in \mathcal{P}(\D)} |\mathbf{D}|  \qquad \mbox{subject to} \nonumber\\\
& & \!\!\!\!\!\!\!\!\!\!\!\!\!\!\!\!\!\!\!\!\!\!\!\!\!\!\!\!\!\!\!\!\! \forall \mathcal{D} \in \mathbf{D}: \ a(G(\mathcal{D})) \leq \eta, \ \ |\mathcal{D}| \leq K, \ \ G(\mathcal{D}) \mbox{ is weakly-connected}.\label{eq:clustering-prob-constraints}
\end{eqnarray}
\end{problem}

For every $\mathcal{D} \in \mathbf{D}^*$, we informally term the union graph $G(\mathcal{D})$  \emph{prima-facie graph}.
In fact, apart from  addressing the Simpson's paradox, every union graph $G(\mathcal{D})$ may be interpreted as a graph containing all prima-facie causes underlying the propagation group $\mathcal{D}$.
To this purpose, note that every $G(\mathcal{D})$ reflects both the temporal-priority and the probability-raising conditions in Suppes' theory.
Temporal priority is guaranteed by the input itself, as an arc $u \rightarrow v$ in some input \DAG (and, thus, in every output union graph) exists only if an input observation exists in $\A$, where the same entity is observed first in $u$ and then in $v$.
Probability raising arises as we ask for a partition with the minimum number of small-agony groups, which means that the identified groups will be as large as possible (as long as the problem constraints are satisfied).
This way, every output group will likely contain as much evidence as possible for every causal claim $u \rightarrow v$, i.e., a large evidence that the effect is observed thanks to the cause.
At the same time, the output union graphs overcome the limitation of missing spatial proximity in Suppes' theory: all arcs within the output graphs come from the input social graph $G$, which implicitly encodes a notion of spatial proximity corresponding to the social relationships among users.

Switching to a technical level, a simple observation on  Problem~\ref{prob:clustering} is that it is well-defined, as it always admits at least the solution where every \DAG forms a singleton group.
A more interesting characterization is the  \NPhard{ness} of the problem, which we formally state next.
In Section~\ref{sec:algorithms} we will instead present the approximation algorithms we designed to solve the problem.

\begin{mytheorem}
Problem~\ref{prob:clustering} is \NPhard.
\end{mytheorem}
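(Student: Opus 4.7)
\medskip
\noindent \textbf{Proof plan.} My plan is to reduce from Graph $k$-Coloring, which is \NPhard\ for $k\ge 3$. Given an instance $(H,k)$ with $H=(V_H,E_H)$, I would build in polynomial time an instance of \clusteringprob whose optimum $|\mathbf{D}^*|$ equals the chromatic number $\chi(H)$, so that deciding $|\mathbf{D}^*|\le k$ is equivalent to deciding whether $H$ is $k$-colorable.

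\medskip
\noindent The construction I have in mind introduces one entity $\phi_u\in\E$ per vertex $u\in V_H$, a single ``root'' node $r$, and, for every edge $e=\{u,v\}\in E_H$ (canonically oriented $u<v$), a vertex-disjoint pair of ``edge-gadget'' nodes $x_e,y_e$. The social graph $G$ would contain, for each $e\in E_H$, the two opposite arcs $x_e\to y_e$ and $y_e\to x_e$ together with $r\to x_e$ and $r\to y_e$. I would then set the observations so that every trace $\phi_u$ sees $r$ at time~$1$ and, for each edge $e\ni u$, sees $x_e$ before $y_e$ when $u$ is the smaller endpoint of $e$ and $y_e$ before $x_e$ otherwise. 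Consequently each induced $D_{\phi_u}$ is a DAG, weakly connected through $r$, containing exactly the arc $x_e\to y_e$ when $u$ is the smaller endpoint of $e$ and $y_e\to x_e$ otherwise.

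\medskip
\noindent The key structural property I would then verify is that, in the union $G(\{D_{\phi_u}\}_{u\in S})$, the edge gadgets are vertex-disjoint and every arc incident to $r$ leaves $r$, so $r$ cannot lie on any directed cycle. Hence any cycle must be a $2$-cycle $x_e\rightleftarrows y_e$, which appears iff both endpoints of $e$ belong to $S$. This gives $a(G(\{D_{\phi_u}\}_{u\in S}))=0$ when $S$ is an independent set of $H$, and $a(G(\{D_{\phi_u}\}_{u\in S}))\ge 2$ otherwise. Choosing $\eta=1$ and $K=|V_H|$ trivialises the size constraint and the weak-connectivity constraint (every non-empty union is connected through $r$), so feasible groups are exactly independent sets of $H$, feasible partitions are in bijection with proper colorings of $H$, and $|\mathbf{D}^*|=\chi(H)$.

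\medskip
\noindent The main obstacle I expect is ruling out spurious cycles that could arise when three or more DAGs are merged, as well as stray arcs that the social graph $G$ might induce between co-observed nodes coming from different edges. Both concerns would be dispatched by (i) the vertex-disjointness of the edge gadgets, which ensures $G$ contains no arc bridging two different gadgets, and (ii) the ``outward'' orientation of all root-incident arcs, which ensures $r$ cannot close any cycle. With these two structural invariants in place, all cycles are confined to a single gadget $\{x_e,y_e\}$ and the reduction goes through cleanly.
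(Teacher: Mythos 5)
Your reduction is correct, but it takes a genuinely different route from the paper's. The paper reduces from \setcover: it builds one \DAG per universe element and, for each set $S_j$, a cycle gadget of length $n+2$ from which each \DAG omits carefully chosen nodes, so that (with $\eta=m(n+2)-1$) a group of \DAGs is meant to be feasible exactly when its elements all lie in some $S_j$, making the optimal partition size equal the optimal cover size. You instead reduce from graph coloring, with one \DAG per vertex of $H$ and a vertex-disjoint $2$-cycle gadget per edge in which each endpoint contributes exactly one of the two opposite arcs; with $\eta=1$ and $K=|\D|$ the feasible groups are precisely the independent sets of $H$, so $|\mathbf{D}^*|=\chi(H)$ and deciding $|\mathbf{D}^*|\le 3$ is \NPhard. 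Your argument is the more elementary and, in one respect, the more robust of the two: because each \DAG supplies exactly one arc of a two-arc cycle, the ``cycle closes iff both endpoints are selected'' equivalence is immediate in both directions, whereas the paper's long-cycle gadget only argues the ``only if'' direction explicitly and must still guarantee that the full cycle actually materializes in the union whenever the group is not contained in $S_j$ --- delicate, since every \DAG in the group omits a different node of that cycle together with its incident arcs. Your reduction also buys something extra: since it preserves the optimum exactly as the chromatic number, it shows that no polynomial-time algorithm can approximate Problem~\ref{prob:clustering} within $|\D|^{1-\epsilon}$ unless $\mathrm{P}=\mathrm{NP}$ (not in conflict with Theorems~\ref{th:approx-twostep} and~\ref{th:approx-proposed}, whose algorithms are not worst-case polynomial). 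Two details to tie down when writing this up fully: treat isolated vertices of $H$ (their traces observe only the root $r$, which is harmless), and verify that a $2$-cycle forces agony at least $2>\eta$ under \emph{every} ranking, including the degenerate case $r(x_e)=r(y_e)$ where each arc contributes $1$.
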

\begin{proof}
We consider the well-known \NPcomplete\ (decision version of the) \setcover problem~\cite{Cook}:
given a universe $U = \{e_1, \ldots, e_n\}$ of elements, a set $\mathcal{S} = \{S_1, \ldots, S_m\} \subseteq 2^U$ of subsets of $U$, and a positive integer $p$, is there a subset $\mathcal{S}^* \subseteq \mathcal{S}$ of size $\leq p$ covering the whole universe $U$?
We reduce \setcover to the decision version of our \clusteringprob, which is as follows: given a set $\D$ of \DAGs, and three positive integers $K, \eta, q$, is there a partition of $\D$ of size $\leq q$ satisfying all constraints in Equation~(\ref{eq:clustering-prob-constraints})?
Given an instance $I = \langle U, \mathcal{S}, p \rangle$ of \setcover, we construct  an instance $I' = \langle \D, K, \eta, q \rangle$ of \clusteringprob
so that $I$ is a yes-instance for \setcover if and only if $I'$ is a yes-instance for \clusteringprob.
To this end, we set $\eta = m(n+2)-1$, $K = |\D|$, $q = p$, and we let $\D$ be composed of a \DAG $D_i$ for every element $e_i \in U$.
Each \DAG $D_i \in \D$ has node set $V(D_i) = \{u_i, v_1, \ldots, v_m\} \cup  V_1^i \cup  \cdots \cup V_m^i$, i.e.,
it comprises a node $u_i$, nodes $v_1, \ldots, v_m$ such that $v_j$ corresponds to set $S_j \in \mathcal{S}$,
and further node sets $V_1^i, \ldots, V_m^i$ defined as
\vspace{-2mm}
$$
V_j^i = \left \{
\begin{array}{ll}
V_j \setminus \{v_{ji}\}, & \ \ \mbox{if} \ e_i \notin S_j,\\
V_j \setminus \{v_{j0}, v_{ji}\}, & \ \ \mbox{otherwise},
\end{array}\right .
\vspace{-2mm}
$$
where $V_j = \{v_{j0}, v_{j1}, \ldots, v_{jn}\}$.
The arc set of $D_i$ is composed of an arc from $u_i$ to every node $v_j$, and, for each $v_j$, ($i$) an arc $(v_j, v_{j0})$ (only if $v_{j0} \in V_j^i$), ($ii$) an arc $(v_{jn}, v_j)$ (only if $v_{jn}\in V_j^i$), and ($iii$) an arc $(v_{jk}, v_{jk+1})$ for every $k \in [0..n-1]$ such that $v_{jk}, v_{jk+1} \in V_j^i$.
It is easy to see that each $D_i$ can be constructed in polynomial time in the size of $I$, and is actually a \DAG.
In this regard, note that, for each $v_j$, $V_j^i$ misses at least one node needed to form the cycle $v_j \rightarrow v_{j0} \rightarrow \cdots \rightarrow v_{jn} \rightarrow v_j$ (i.e., at least the node $v_{ji}$).

Let $\mathbf{D}^+ \subseteq 2^{\D}$ be the set of all \DAG sets that are admissible for \clusteringprob on instance $I'$.
We now show that $\mathbf{D}^+ \equiv \mathcal{S}^+$, where $\mathcal{S}^+ = \bigcup_{S \in \mathcal{S}} 2^{S}$ corresponds to the set $\mathcal{S}$ augmented by all possible subsets of every set within it.
First of all, we observe that every \DAG set $\mathcal{D} \subseteq \D$ meets both the constraint on the connectedness of the union graph $G(\mathcal{D})$ (each \DAG is connected and shares at least one node with every other \DAG in $\D$), and the constraint on the maximum size (as $K = |\D|$).
This way, the set $\mathbf{D}^+$ of admissible \DAGs is determined by the agony constraint only.
For every $j \in [1..m]$, the union graph $G(\mathcal{D})$ of a \DAG set $\mathcal{D} \subseteq \D$ contains a cycle  $v_j \rightarrow v_{j0} \rightarrow \cdots \rightarrow v_{jn} \rightarrow v_j$ only if set $S_j$ does not contain all elements (corresponding to the \DAGs) in $\mathcal{D}$.
Denoting by $m_{\mathcal{D}}$ the number of such cycles being present in $G(\mathcal{D})$, the agony of $G(\mathcal{D})$ is equal to $m_{\mathcal{D}}(n+2)$, as
all cycles have length $n\!+\!2$ and they are all node-disjoint.
Therefore, the agony of $G(\mathcal{D})$ exceeds the threshold $\eta$ only if $m_{\mathcal{D}} = m$, i.e., only if there exists no set in $\mathcal{S}^+$ containing $\mathcal{D}$, thus meaning that $\mathbf{D}^+ \equiv \mathcal{S}^+$.

We are now ready to show the desired ``$\Leftrightarrow$'' implication between instances $I$ and $I'$.
As for the ``$\Rightarrow$'' part, we notice that, from the covering $\mathcal{S}^*$ representing the solution of \setcover on $I$, it can always be derived a covering $\mathcal{S}'$ of size $|\mathcal{S}'| = |\mathcal{S}^*|$ that is a partition of $U$ (e.g., by processing each $S \in \mathcal{S}^*$ following some ordering, and replacing any set $S$ containing a subset $S' \subseteq S$ of elements covered by already-processed sets with the set $S \setminus S'$).
$\mathcal{S'}$ is a solution of \clusteringprob on $I'$ as its size is $|\mathcal{S}'| = |\mathcal{S}^*| \leq p = q$, it is a partition of $U$ (and, therefore, of $\D$), and it is admissible as $\mathcal{S}' \subseteq \mathcal{S}^+$.
The ``$\Leftarrow$'' part holds due to the following reasoning.
Given the solution $\mathbf{D}^*$ of \clusteringprob on $I'$, one can derive a covering $\mathbf{D}'$ where every set $\mathcal{D} \in \mathbf{D}^*$ such that $\mathcal{D} \in \mathcal{S}^+ \setminus \mathcal{S}$ is replaced with the original set $S_{\mathcal{D}} \in \mathcal{S}$ where $\mathcal{D}$ has been derived from.
$\mathbf{D}'$ is a solution of \setcover on $I$ as its size is $|\mathbf{D}'| = |\mathbf{D}^*| \leq q = p$, it covers $U$, and it is a subset of $\mathcal{S}$.
\end{proof}


\subsection{Learning the minimal causal topology}
\label{sec:subproblem2}

As mentioned above, prima facie causes may be genuine or spurious~\cite{hitchcock1997probabilistic}.
Thus, from the prima-facie graphs identified in the previous step we still need to remove the spurious relationships.
To this end, we aim at identifying a \emph{minimal causal topology} for each prima-facie graph, i.e., selecting  the minimal set of arcs that best explain the input propagations spanned by that graph.

Given a partition $\mathbf{D}^*$ of the input database $\D$ of propagations (computed in the previous step), we first reconstruct a \DAG $G_D(\mathcal{D})$ from the prima-facie graph $G(\mathcal{D})$ of every group $\mathcal{D} \in \mathbf{D}^*$.
To this end, we exploit the by-product of agony minimization on $G(\mathcal{D})$, i.e., a ranking $r$ of the nodes in $G(\mathcal{D})$ (see Section~\ref{sec:framework}).
Specifically, we build $G_D(\mathcal{D})$ by taking all and only those arcs of $G(\mathcal{D})$ that are in accordance with $r$, i.e., all arcs $(u,v)$ such that $r(u) < r(v)$.
%
Then, for every reconstructed \DAG $G_D(\mathcal{D})$, we learn its minimal causal topology via (constrained) \emph{maximum likelihood estimation} (MLE), where the arcs of $G_D(\mathcal{D})$ maximizing a likelihood score such as \emph{Bayesian  Information Criterion}~(BIC)~\cite{bic_1978} or \emph{Akaike Information Criterion}~(AIC)~\cite{akaike1974new} are identified (we experimented with both criteria, see Section~\ref{sec:experiments}).
More precisely, given a database $\D$ of propagations and a set $\hat{A} \subseteq A$ of arcs, we define
$$
f(\hat{A}, \D) = LL(\D|\hat{A}) - \mathcal{R}(\hat{A}),
$$
where $LL(\cdot)$ is the log-likelihood, while $\mathcal{R}(\cdot)$ is a regularization~term.
The \DAG induced by $\hat{A}$ in turn induces a probability distribution over its nodes $\{u_1, \ldots, u_n\}$:
$$
\Probab{u_1, \ldots, u_n} = \prod_{u_i=1}^n \Pcond{u_i}{\pi_i}, \qquad
\Pcond{u_i}{\pi_i} = \bth_{u_i\mid \pi_i},
$$
where $\pi_i = \{ u_j \mid u_j \to u_i \in \hat{A}\}$ are $u_i$'s parents in the \DAG, and $\bth_{u_i\mid \pi(u_i)}$ is a probability density function. Then, the log-likelihood of the network is defined as:
$$
LL(\D|\hat{A}) = \log\Pcond{\D}{\hat{A},\bth}.
$$
The regularization~term $\mathcal{R}(\hat{A})$ introduces a penalty term for the number of parameters in the model and the size of the data. Specifically, $S$ being the number of samples, $\mathcal{R}(\hat{A})$ is defined as $|\hat{A}|$ for AIC and $\frac{|\hat{A}|}{2}\log S$ for BIC.

The problem we tackle here is formally stated as follows.
\begin{problem}[\minimaltopology]\label{problem2}
Given a database $\D$ of propagations and a \DAG $G_D(\mathcal{D}) = (V_D,A_D)$, find $A_D^*(\mathcal{D}) = \arg\max_{\hat{A}_D \subseteq A_D} f(\hat{A}_D, \D)$.
\end{problem}



Even if constrained (the output arc set $A_D^*$ must be a subset of $A_D$), Problem~\ref{problem2} can easily be shown to be still \NPhard~\cite{learning_NP_hard}.
Therefore, we solve the problem by a classic greedy hill-climbing heuristic, whose effectiveness has been well-recognized~\cite{koller2009probabilistic}.

The ultimate output of our second step (and of our overall approach) is a set of  \DAGs $\{G^*_D(\mathcal{D})\}_{\mathcal{D} \in \mathbf{D}^*}$, where each $G^*_D(\mathcal{D})$ is the \DAG identified by the arc set $A^*(\mathcal{D})$ as defined in Problem~\ref{problem2}.
Every $G^*_D(\mathcal{D})$ is a causal \DAG representative of a specific social-influence causal process underlying the input propagation database $\D$.

\section{Algorithms}
\label{sec:algorithms}

%

In this section we focus on the algorithms for the \clusteringprob problem (Problem~\ref{prob:clustering}).
Due to its \NPhard{ness}, we clearly cannot aim at optimality, and focus instead on the design of effective and efficient approximation algorithms.
Specifically, we first show how a simple two-step strategy solves the problem with provable guarantees.
This method however suffers from the limitation that the first step is exponential in the size of the input \DAG set, which considerably limits  its applicability in practice.
We hence design a more refined sampling-based algorithm, which still comes with provable  guarantees, while also overcoming the exponential blowup.

\spara{A simple two-step algorithm.}
An immediate solution to Problem~\ref{prob:clustering} consists in first computing all subsets of the input \DAG set $\D$ that satisfy the constraints on agony, size, and connectedness listed in Equation~(\ref{eq:clustering-prob-constraints}) (Step~I), and then taking a minimum-sized  subset of these valid \DAG sets  that is a partition of $\D$ (Step~II).

Step~I can be solved by resorting to frequent-itemset mining \cite{CharuBook}. Specifically, in our setting the \DAGs in $\D$ correspond to items and the support of a \DAG set (itemset) $\mathcal{D}$ is given by the agony of the union graph $G(\mathcal{D})$.
It is easy to see that the constraint on agony is monotonically non-decreasing as the size of a \DAG set increases, i.e., $a(G(\mathcal{D}')) \leq a(G(\mathcal{D}''))$ for any two \DAG sets $\mathcal{D}' \subseteq \mathcal{D}''$.
This way, any downward-closure-based algorithm for frequent itemset-mining (e.g., Apriori \cite{apriori}) can easily be adapted to mine all \DAG sets satisfying the agony constraint. 
The two additional constraints on (1) connectedness and (2) size can easily be fulfilled by (1) filtering out all mined \DAG sets that are not connected,
and (2) stopping the mining procedure once the maximum size $K$ has been reached. 

\begin{algorithm}[t]
\caption{\twostepalg}
\label{alg:two-step-clustering}
\begin{algorithmic}[1]
\small
\REQUIRE{A set $\D$ of \DAGs; two positive integers $K$, $\eta$}
\ENSURE{A partition $\mathbf{D}^*$ of $\D$}
\vspace{1mm}
\STATE{$\mathbf{D}^+ \gets \mbox{\textsf{Mine-Valid-}\DAG-\textsf{sets}}(\D,K,\eta)$}
\STATE{$\mathbf{D}^* \gets \mbox{\textsf{Greedy-Set-Cover}}(\mathbf{D}^+)$}
\end{algorithmic}
\end{algorithm}

As for Step~II, we observe that solving \setcover~\cite{Cook} on the set $\mathbf{D}^+ \subseteq 2^{\D}$ mined in Step~II gives the optimal solution to Problem~\ref{prob:clustering} too.
Therefore, we solve Step~II by the well-known \setcover greedy algorithm which iteratively brings to the solution the set having the maximum number of still uncovered elements~\cite{Cook}, and has approximation factor logarithmic in the maximum size of an input set.
Note that, as $\mathbf{D}^+$ contains all subsets of every set $\mathcal{D} \in \mathbf{D}^+$, at each step of the greedy \setcover algorithm there are multiple sets maximizing the number of uncovered elements, all of them being equivalent in terms of soundness and approximation ratio.
Among them, we therefore choose the one having no already-covered elements. This way, the output covering is guaranteed to be a partition of $\D$, as required by \clusteringprob.

The outline of this simple two-step method is reported as Algorithm \ref{alg:two-step-clustering}.
The algorithm achieves a $\log K$ approximation ratio.

\begin{mytheorem}\label{th:approx-twostep}
$\!\!$\mbox{Algorithm~\ref{alg:two-step-clustering}~is~a~$(\log K)$-approximation~for~Problem~\ref{prob:clustering}}.
\end{mytheorem}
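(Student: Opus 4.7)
My plan is to reduce the approximation analysis to the classical $H_d$-bound of the greedy \setcover algorithm, exploiting the correspondence between \clusteringprob and \setcover already leveraged in the \NPhard{ness} proof.

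First I would set up the associated \setcover instance, with universe $\D$ and candidate-set family $\mathbf{D}^+$ (the output of Step~I of Algorithm~\ref{alg:two-step-clustering}). Since any feasible partition of $\D$ for Problem~\ref{prob:clustering} is, in particular, a valid \setcover solution, we have $OPT_{\text{cover}}(\D, \mathbf{D}^+) \leq OPT(\text{Problem~\ref{prob:clustering}})$. Invoking the classical analysis of greedy \setcover, its output has size at most $H_d \leq \ln d + 1$ times $OPT_{\text{cover}}$, where $d$ is the maximum cardinality of a candidate set. Since the size constraint of Problem~\ref{prob:clustering} enforces $|\mathcal{D}| \leq K$ for every $\mathcal{D} \in \mathbf{D}^+$, the partition returned by Step~II has size at most $H_K \cdot OPT_{\text{cover}} \leq H_K \cdot OPT = O(\log K) \cdot OPT$, which is the desired bound.

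Second, I would verify that the specific tie-breaking rule of Step~II -- which, among all greedy-best candidates, selects one disjoint from the already-covered elements so as to return a true partition rather than an arbitrary cover -- does not worsen the ratio. By downward closure of the agony and size constraints, if $\mathcal{D} \in \mathbf{D}^+$ is greedy-best at some iteration, then the restriction $\mathcal{D}' := \mathcal{D} \cap \text{uncovered}$ satisfies $a(G(\mathcal{D}')) \leq a(G(\mathcal{D})) \leq \eta$ and $|\mathcal{D}'| \leq K$ and covers exactly the same uncovered elements; hence $\mathcal{D}'$ is itself greedy-best and contributes the same amortized charge as $\mathcal{D}$ in the standard analysis. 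A disjoint-from-covered greedy-best candidate is therefore always available, and selecting it preserves both the partition property and the $H_K$ ratio.

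The main obstacle I anticipate concerns the connectedness constraint, which -- unlike agony and size -- is \emph{not} monotone under taking subsets of \DAG sets: the restriction $\mathcal{D}'$ above may fail to belong to $\mathbf{D}^+$ for want of connectedness of $G(\mathcal{D}')$. I would patch this by further splitting $\mathcal{D}'$ into the \DAG subsets induced by the connected components of $G(\mathcal{D}')$ (each of which is connected, of size at most $K$, and of agony at most $\eta$, hence in $\mathbf{D}^+$), and then showing that this modification still admits an $O(\log K)$ amortized charge against $OPT_{\text{cover}}$. Carefully bookkeeping the charge over all components produced across greedy iterations, so that the total does not blow up the $H_K$ factor, is where I expect most of the technical effort to sit.
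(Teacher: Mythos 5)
Your first two paragraphs reproduce the paper's own argument almost verbatim: Step~I materializes the family $\mathbf{D}^+$ of feasible \DAG sets, any feasible partition for Problem~\ref{prob:clustering} is in particular a \setcover solution over $(\D,\mathbf{D}^+)$, and the greedy $H_d$ bound with $d\le K$ gives the $\log K$ factor, with the partition property recovered by the disjoint-from-covered tie-break justified through downward closure. Up to that point you are on exactly the paper's route.

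The connectedness obstacle you flag in your last paragraph is genuine, and you should know that the paper does not resolve it either: its proof simply asserts that for every $\mathcal{D}\in\mathbf{D}^+$ all subsets $\mathcal{D}'\subseteq\mathcal{D}$ lie in $\mathbf{D}^+$, which is false once weak connectedness is imposed (agony and cardinality are downward monotone; connectedness is not). Unfortunately your proposed patch --- replacing $\mathcal{D}\cap\mathrm{uncovered}$ by its connected components --- does not restore the bound, so the bookkeeping you defer cannot be made to work in general. The charging argument needs, at the moment a set $S$ of the optimal partition still has $k$ uncovered elements, an \emph{available} candidate covering at least $k$ new elements; after splitting, the best candidate disjoint from the covered elements only covers as much as the largest component of $S\cap\mathrm{uncovered}$, which can be a single \DAG. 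A hub-and-spoke instance makes this concrete: let $S_1=\{A_1,\dots,A_K\}$ and $S_2=\{B_1,\dots,B_K\}$ each be connected only through their hubs $A_1$ and $B_1$, with $A_1$ and $B_1$ also sharing a node, and all union graphs acyclic so agony is never binding. If the first greedy pick is the equally large, feasible set $\{A_1,B_1,A_2,\dots,A_{K-1}\}$, the remaining $K$ \DAGs are pairwise node-disjoint, every multi-element subset of them violates connectedness, and any partition-respecting continuation must emit $K$ singletons, giving ratio $\Theta(K)$ against the optimum $\{S_1,S_2\}$. (Plain greedy would instead re-use $S_2$, staying within $H_K$, but then the output is a cover, not a partition.) So the $\log K$ guarantee for the partition-enforcing algorithm requires either relaxing the connectedness constraint --- under which your downward-closure argument is complete --- or an accounting genuinely different from component-splitting; as written, neither your proposal nor the paper closes this gap.
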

\begin{proof}
Step~I of Algorithm~\ref{alg:two-step-clustering} computes all  \DAG sets $\mathbf{D}^+$ meeting the constraints of Problem~\ref{prob:clustering}.
For every set $\mathcal{D} \in \mathbf{D}^+$, $\mathbf{D}^+$ contains all subsets $\mathcal{D}' \subseteq \mathcal{D}$ too.
This ensures that, for every feasible solution $\hat{\mathcal{S}}$ of \setcover on  $\mathbf{D}^+$, there exists a \setcover solution $\hat{\mathcal{S}}'$ being a partition of $\D$ and having the same objective-function value as $\hat{\mathcal{S}}$.
Thus, for any $\beta$-approximation solution of \setcover on $\mathbf{D}^+$, there exists a $\beta$-approximation solution that is a partition of $\D$.  
The traditional greedy approximation algorithm for \setcover has an approximation factor proportional to the logarithm of the maximum size of an input set.
Hence, running such a greedy method on input $\mathbf{D}^+$ gives approximation guarantees of $\log K$, as all sets in $\mathbf{D}^+$ have size no more than $K$, due to Problem~\ref{prob:clustering}'s constraints. 
\end{proof}

\begin{algorithm}[t]
\caption{\proposedalg}
\label{alg:proposed-sampling}
\begin{algorithmic}[1]
\small
\REQUIRE{A set $\D$ of \DAGs; two positive integers $K$, $\eta$; a real number $\alpha \in (0,1]$}
\ENSURE{A partition $\mathbf{D}^*$ of $\D$}
\vspace{1mm}
\STATE{$\mathbf{D}^* \gets \emptyset$}, \ \ $\D_u \gets \D$
\WHILE{$|\D_u| > 0$}
\vspace{1mm}
\STATE $\mathcal{D}_{s} \gets \emptyset$
\WHILE{$|\mathcal{D}_{s}| < \lceil \alpha\times\min\{K,|\D_u|\} \rceil$}
	\STATE{$\mathcal{D}_{s} \gets \mbox{\samplingSubroutine}(\D_u, K, \eta)$}\hfill\COMMENT{Algorithm~\ref{alg:samplingSubroutine}}
\ENDWHILE
\STATE $\mathbf{D}^* \gets \mathbf{D}^* \cup \{\mathcal{D}_{s}\}$, \ \ $\D_u \gets \D_u \setminus \mathcal{D}_{s}$
\ENDWHILE
\end{algorithmic}
\end{algorithm}

\begin{algorithm}[t]
\caption{\samplingSubroutine}
\label{alg:samplingSubroutine}
\begin{algorithmic}[1]
\small
\REQUIRE{A set $\D_u$ of \DAGs; two positive integers $K$, $\eta$}
\ENSURE{$\mathcal{D}_{s} \subseteq \D_u$}
\vspace{1mm}
\STATE $\mathcal{D}_{s} \gets \emptyset$, \ \ $G(\mathcal{D}_{s}) \gets $ empty graph, \ \ $\D'_u \gets \D_u$
\WHILE{$|\mathcal{D}_{s}| < \min\{K,|\D_u|\} \wedge a(G(\mathcal{D}_{s})) \leq \eta$}
	\STATE $\D_s \gets \emptyset$
	\FORALL{$D \in \D'_u$}
		\STATE $G(\mathcal{D}'_{s}) \gets G(\mathcal{D}_s) \cup D$
		\IF{$G(\mathcal{D}'_{s})$ is weakly connected}		
			\STATE $a(G(\mathcal{D}'_{s})) \gets$ \textsf{Compute-Agony}($G(\mathcal{D}'_{s}$))\hfill\COMMENT{cf.~\cite{Tatti14}}
			\STATE {\bf if} {$a(G(\mathcal{D}'_{s})) \leq \eta$} \ {\bf then} \ $\D_s \gets \D_s \cup \{D\}$
			\STATE {\bf else} \ $\D'_u \gets \D'_u \setminus \{D\}$
		\ENDIF
	\ENDFOR
	\STATE $D^* \gets $ sample a \DAG from $\D_s$
	\STATE $\mathcal{D}_{s} \gets \mathcal{D}_{s} \cup \{D^*\}$, \ \ $G(\mathcal{D}_{s}) \gets G(\mathcal{D}_{s}) \cup D^*$, \ \ $\D'_u \gets \D'_u \setminus \{D^*\}$
\ENDWHILE
\end{algorithmic}
\end{algorithm}

\spara{A sampling-based algorithm.}
The algorithm described above is easy-to-implement and comes with provable approximation guarantees.
Nevertheless, it has a major drawback that the first step is intrinsically exponential in the size of the input \DAG set.
Even though pruning techniques can be borrowed from the frequent-itemset-mining domain, there is no guarantee in practice that the algorithm always terminates in reasonable time.
For instance, when the size of the \DAG sets satisfying the input constraints  tends to be large, the portion of the lattice to be visited may explode regardless of the pruning power of the specific method.
This is a well-recognized issue of frequent-itemset mining \cite{CharuBook}.

Faced with this hurdle, we devise an advanced algorithm that deals with the pattern-explosion issue, while still achieving approximation guarantees.
The proposed algorithm, termed \proposedalg and outlined as Algorithm~\ref{alg:proposed-sampling}, follows a greedy scheme and has a parameter $\alpha \in (0,1]$ to trade off between accuracy and efficiency.
The algorithm iteratively looks for a \emph{maximal} admissible \DAG set $\mathcal{D}_{s}$ covering a number of still uncovered \DAGs no less than $\alpha \times \min \{K, |\D_u| \}$, where $\D_u$ is the set of all still uncovered \DAGs (Lines~4--5).
$\mathcal{D}_{s}$ is computed by repeatedly sampling the lattice of all admissible (and not yet covered) \DAG sets, until a \DAG set satisfying the requirement has been found.
Sampling can be performed by, e.g., uniform \cite{AiZaki2009} or random \cite{MoensGoethals2013} maximal frequent-itemset sampling.
In this work we use the latter.
The outline of the sampling subroutine is in Algorithm~\ref{alg:samplingSubroutine}.
That procedure takes the set $\D_u$ of uncovered \DAGs and selects \DAGs until $\D_u$ has become empty, or the max size $K$ has been reached, or the agony constraint on the union graph of the current \DAG set has been violated (Lines~2--11).
To select a \DAG, the subset $\D_s \subseteq \D'_u$ of admissible \DAGs is first built by retaining all \DAGs that meet constraints on  connectedness and agony if added to the current union graph (Lines~3--9). Note that, if a \DAG violates the agony constraint, it cannot become admissible anymore, thus it is permanently discarded (Line~9).
The same does not hold for the connectedness constraint.

\proposedalg can be proved to be a $\frac{\log K}{\alpha}$-approximation algorithm for \clusteringprob, as formally stated in Theorem~\ref{th:approx-proposed}.
Thus, parameter $\alpha$ represents a knob to trade off accuracy vs. efficiency: a larger $\alpha$ gives a better approximation factor (thus, better accuracy), but, at the same time, leads to bigger running time as more sampling iterations are needed to find a \DAG set meeting a more strict constraint.

\begin{mytheorem}\label{th:approx-proposed}
$\!$\mbox{Algorithm~\ref{alg:proposed-sampling} is a $\frac{\log K}{\alpha}$-approximation for Problem~\ref{prob:clustering}}.
\end{mytheorem}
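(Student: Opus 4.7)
The plan is a charging argument, in the spirit of the standard dual-fitting proof of the $H(K)$-bound for greedy set cover, inflated by a factor $1/\alpha$ to compensate for the fact that each outer iteration of Algorithm~\ref{alg:proposed-sampling} only guarantees a \DAG-set of size at least $\alpha\min\{K,|\D_u|\}$, rather than an exactly greedy-best one. Let $\mathbf{D}^*=\{O_1,\dots,O_p\}$ be an optimal partition of $\D$, so $p=|\mathbf{D}^*|$ and $|O_j|\leq K$ for every $j$; let $\mathcal{D}_1,\dots,\mathcal{D}_M$ be the sets produced by successive executions of the outer \textbf{while} loop of Algorithm~\ref{alg:proposed-sampling}; and let $n_{i-1}$ denote $|\D_u|$ at the start of iteration $i$. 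The guard on Lines~4--6 enforces the invariant $|\mathcal{D}_i|\geq\lceil\alpha\min\{K,n_{i-1}\}\rceil$; the target is to conclude $M\leq (\log K/\alpha)\,p$.

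The core step is the charging itself. For each \DAG $D$ that is first covered at iteration $i$, assign the cost $c(D)=1/|\mathcal{D}_i|$, so that $\sum_{D\in\D}c(D)=M$. Fix $O=\{o_1,\dots,o_k\}\in\mathbf{D}^*$ with $k\leq K$, and order its elements so that $o_1$ is the first one covered by the algorithm, $o_2$ the second, and so on. When $o_j$ is covered at some iteration $i$, the suffix $\{o_j,\dots,o_k\}\subseteq\D_u$ is still uncovered, which forces $n_{i-1}\geq k-j+1$; combined with $k-j+1\leq k\leq K$ this gives $\min\{K,n_{i-1}\}\geq k-j+1$ and hence $|\mathcal{D}_i|\geq\alpha(k-j+1)$, so $c(o_j)\leq 1/(\alpha(k-j+1))$. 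Summing over $j$, the cost charged to $O$ is at most $H(k)/\alpha\leq H(K)/\alpha\leq (\log K)/\alpha$ (absorbing the harmonic constant into $\log K$), and summing over the $p$ disjoint blocks of $\mathbf{D}^*$ yields $M\leq p\,(\log K)/\alpha$, which is precisely the advertised approximation ratio.

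The main obstacle is justifying the per-iteration invariant in the first place: the inner \textbf{while} loop of Algorithm~\ref{alg:proposed-sampling} can reach the threshold $\alpha\min\{K,n_{i-1}\}$ only if a maximal admissible \DAG-set of that size actually lives in $\D_u$ and the random sampler of Algorithm~\ref{alg:samplingSubroutine} eventually discovers one. Existence would be argued through a witness drawn from $\mathbf{D}^*$: by averaging, some $O_j$ meets $\D_u$ in at least $n_{i-1}/p$ uncovered \DAGs, and because both the agony bound and the cardinality bound are monotone under taking subsets, $O_j\cap\D_u$ is almost admissible. The delicate point is weak connectedness, which is not hereditary under subsets, so one has to pass to a weakly-connected sub-cluster of $O_j\cap\D_u$ and argue it still exceeds the required threshold; granted this existence, the sampler must eventually return such a set, the per-iteration invariant holds throughout, and the charging argument above then closes the proof.
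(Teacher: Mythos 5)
Your argument is correct and arrives exactly where the paper does, but by a more elementary, self-contained route. The paper does not re-derive the bound at all: it invokes a black-box lemma (Lemma~2 of~\cite{Cormode2010}) stating that a greedy \setcover procedure which at each step covers at least an $\alpha$ fraction of $u_{max}(t)$ (the maximum number of uncovered elements coverable by one admissible set) achieves ratio $\frac{\log h_{max}}{\alpha}$, and then just checks the hypotheses with $h_{max}\leq K$ and $u_{max}(t)=\min\{K,|\D_u(t)|\}$. Your charging scheme --- $c(D)=1/|\mathcal{D}_i|$, charge each optimal block $O$ at most $H(|O|)/\alpha\leq H(K)/\alpha$ via the observation that when the $j$-th element of $O$ is covered the suffix of $O$ is still uncovered --- is precisely a proof of that lemma specialized to this setting. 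What you buy is independence from the external reference and an explicit accounting of where $|O|\leq K$ is used; what the paper buys is brevity. A minor bonus of your version: since $\mathbf{D}^*$ is a partition rather than a mere cover, your final summation over blocks is exact and needs no overcounting correction, and the $H(K)$-versus-$\log K$ slack you absorb is the same slack the paper absorbs.

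On the obstacle you flag: you are right that the invariant $|\mathcal{D}_i|\geq\lceil\alpha\min\{K,n_{i-1}\}\rceil$ is only \emph{enforced} by the guard of the inner sampling loop, and is actually \emph{attained} only if an admissible \DAG set of that size exists among the uncovered \DAGs and is eventually sampled; otherwise the loop does not terminate. But the paper's proof does not resolve this either: it asserts $u_{max}(t)=\min\{K,|\D_u(t)|\}$ without justification (only $u_{max}(t)\leq\min\{K,|\D_u(t)|\}$ is immediate, since agony and connectedness may rule out large sets of uncovered \DAGs), and implementation expedient ($v$) tacitly concedes the sampling loop may fail. Your witness sketch indeed does not close the gap --- averaging over $\mathbf{D}^*$ only yields a block meeting $\D_u$ in $n_{i-1}/p$ elements, not $\alpha\min\{K,n_{i-1}\}$, and weak connectedness is not hereditary under subsets --- but this is a gap in the theorem's stated hypotheses that both proofs share; yours at least makes it visible rather than burying it in the claim that the added set always covers $\alpha\times u_{max}(t)$ elements.
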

\begin{proof}
For each step $t$, let $u_{max}(t)$ denote the maximum number of uncovered elements that can be covered by a set not yet included in the current solution.
It is known that a greedy algorithm for \setcover  finding at each step a set that covers a fraction of uncovered elements no less than $\alpha \times u_{max}(t)$ achieves $\frac{\log h_{max}}{\alpha}$ approximation guarantees, where $h_{max}$ is the maximum size of an input set (see Lemma 2 in \cite{Cormode2010}).
In our context $h_{max} \leq K$, $u_{max}(t) = \min\{K,|\D_u(t)|\}$, and the \DAG set  $\mathcal{D}_{s}(t)$ that is added to the solution by Algorithm~\ref{alg:proposed-sampling} at Step~$t$ covers a number of still uncovered \DAGs that is $\geq \alpha \times \min \{K, |\D_u(t)|\} = \alpha \times u_{max}(t)$. Thus, the ultimate approximation factor of Algorithm \ref{alg:proposed-sampling} is $\frac{\log K}{\alpha}$.
\end{proof}


\spara{Time complexity.}
Let $H$ be the (maximum) number of sampling iterations needed to find a valid \DAG set $\mathcal{D}_s$.
The sampling subroutine (Algorithm~\ref{alg:samplingSubroutine}) takes $\mathcal{O}(K~|\D|~T_a)$ time, where $T_a$ is the (maximum) time spent for a single agony computation.
The subroutine is executed $\mathcal{O}(H~|\mathbf{D}^*|)$ times in Algorithm~\ref{alg:proposed-sampling}.
Thus, the overall time complexity of the proposed \proposedalg algorithm is $\mathcal{O}(H~|\mathbf{D}^*|~K~|\D|~T_a)$.
The efficient agony-computation method in~\cite{Tatti14} takes time quadratic in the edges of the input graph.
Hence, $T_a$ is bounded by $\mathcal{O}(|A|^2)$, where $A$ is the arc set of the input social graph $G$.
However, this is a very pessimistic bound, first because, as remarked by the authors themselves, the method in~\cite{Tatti14} is much faster in practice, and, more importantly, because agony computation in Algorithm~\ref{alg:samplingSubroutine} is run on much smaller subgraphs of~$G$.

\spara{Implementation.}
A number of expedients may be employed to speed up the \proposedalg algorithm in practice, including:
($i$)
Preventively discard input \DAGs violating the connectedness constraint with all other \DAGs.
($ii$)
Algorithm~\ref{alg:samplingSubroutine}, Lines~5~and~7: check whether $D$ has arcs spanning nodes already in $G(\mathcal{D}_s)$; if not, skip agony computation of $G(\mathcal{D}'_s)$ (and set it equal to $a(G(\mathcal{D}_s))$). 
($iii$)
Approximate agony instead of computing it exactly (by, e.g., allowing only a fixed amount of time to the anytime algorithm in~\cite{Tatti14}); correctness of the algorithm is not affected as the approximated agony is an upper bound on the exact value.
($iv$)
Adopt a \emph{beam-search} strategy: sample $\widetilde{\D}_u$ from $\D_u$ (with $|\widetilde{\D}_u| < |\D_u|$, e.g., $|\widetilde{\D}_u| = \mathcal{O}(\log |\D_u|)$) and use $\widetilde{\D}_u$ in Algorithm~\ref{alg:samplingSubroutine} instead of $\D_u$.
($v$)
Repeat the sampling procedure at Lines~4--5 of Algorithm~\ref{alg:proposed-sampling}  for a maximum number of iterations; after that, sample a \DAG $D$ from $\D_u$, and add  $\{D\}$ to the solution. 
($vi$)
Run Algorithm~\ref{alg:proposed-sampling} until $|\D_u| > \epsilon$; after that, add the \DAGs still in $\D_u$ as singletons to the solution.

\section{Experiments}
\label{sec:experiments}

In this section we show the performance of the proposed method on both synthetic data, where the true generative model is given, as well as on real data, where no ground-truth is available.

\spara{Reproducibility.}
Code and datasets available at \href{http://bit.ly/2BEV5k9}{bit.ly/2BEV5k9}.

\subsection{Synthetic data} \label{sec:synthetic_data}
\spara{Generation.}
We employ the following methodology:
\begin{enumerate}
\item Randomly generate a directed graph $G = (V,A)$ (our social graph) with $n$ nodes and density $\delta$ (where by density we mean number of edges divided by $\binom{n}{2}$).
\item Randomly partition the node set $V$ of the generated graph $G$ into a set of $k$ groups $\{S_1,\ldots,S_k\}$ such that: $(i)$ $\forall i \in [1,k]$, $G(S_i)$ is weakly-connected, $(ii)$ $\forall i \in [1,k]: |S_i| \in [\mbox{card}_{min},\mbox{card}_{max}] \in (0,n)$, and $(iii)$ the number of overlapping nodes between any pair of groups is bounded by $\mbox{card}_{overlap} \leq \mbox{card}_{max}$.
\item Generate a causal \DAG\ $G_{cause_i}$ of density $\delta_{G_{cause}}$ for each of the $k$ groups. Namely, we obtain $\mathbb{G}_{cause} = \{G_{cause_1},\ldots,G_{cause_k}\}$, by considering the union graph of each partition independently and then randomly removing arcs from it in order to discard cycles. 
Then, we assign each remaining edge a random value bounded by $[p_{min},p_{max}]$ (where $p_{min},p_{max} \in (0,1)$), representing the conditional probability of the child node given the connected parent (we assume the probability of any child given the absence of all its parents to be $0$).
\item Generate a set $\A_k$ of observations for each of the $k$ groups in terms of triples $\langle v, \phi, t \rangle$ (with $\bigcup_{k} \A_k$ corresponding to the whole input set $\A$ of observations), such that each user performs at least one action, but, at the same time, she does not perform an action on at least one item of each cluster. To do this, we sample a set of traces with probability distributions induced by the causal \DAGs; in this way, we obtain a set of users/entities that are observed in the trace. We then add the time $t$ to each observation randomly, but still consistently with the total ordering defined by the \DAG.
\end{enumerate}

\begin{figure*}[t]
\centering
\vspace{-4mm}
\begin{tabular}{@{\!\!\!\!}@{ }@{ }c@{ }@{ }c@{ }@{ }c@{}}
\includegraphics[width=.33\textwidth]{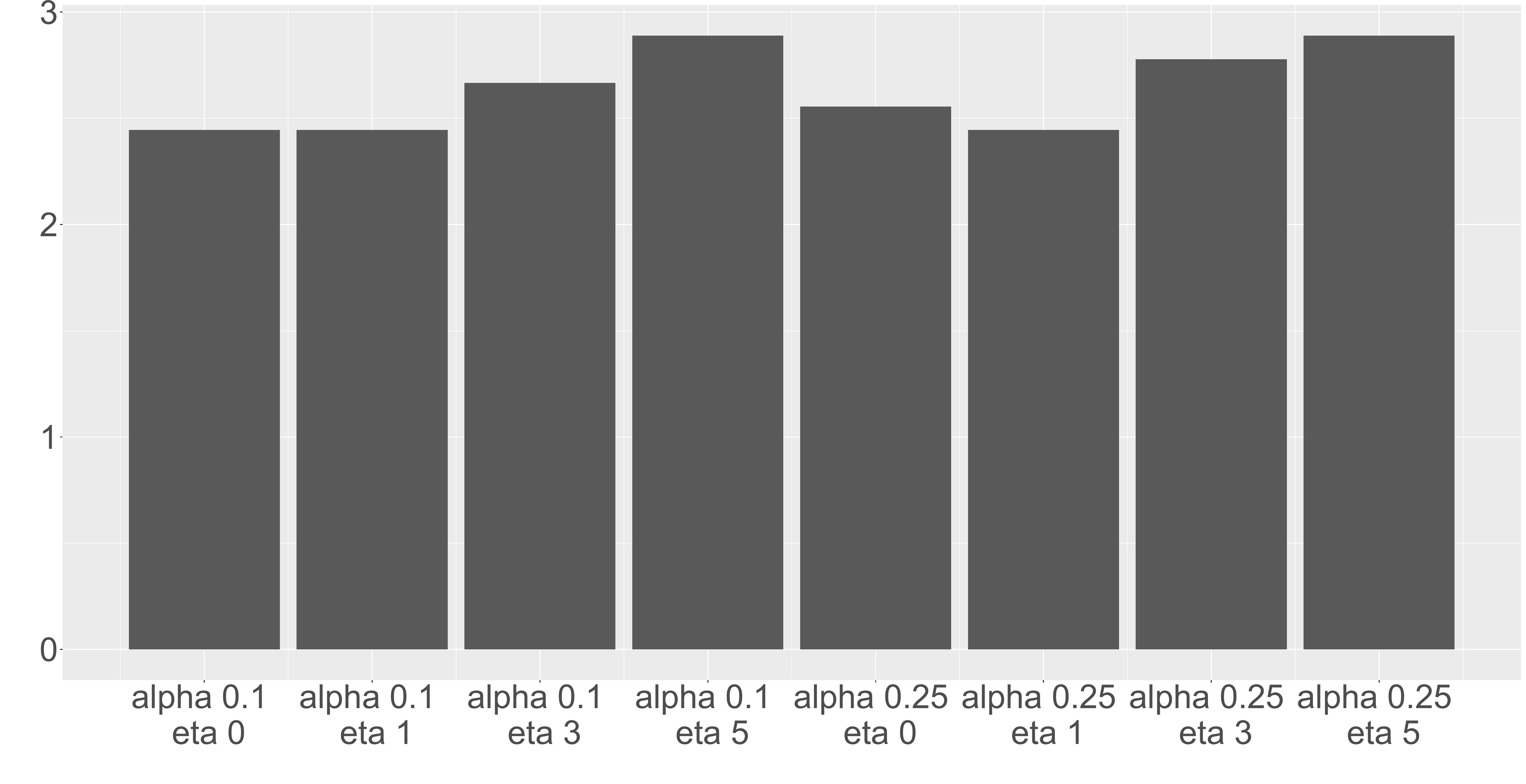} & \includegraphics[width=.33\textwidth]{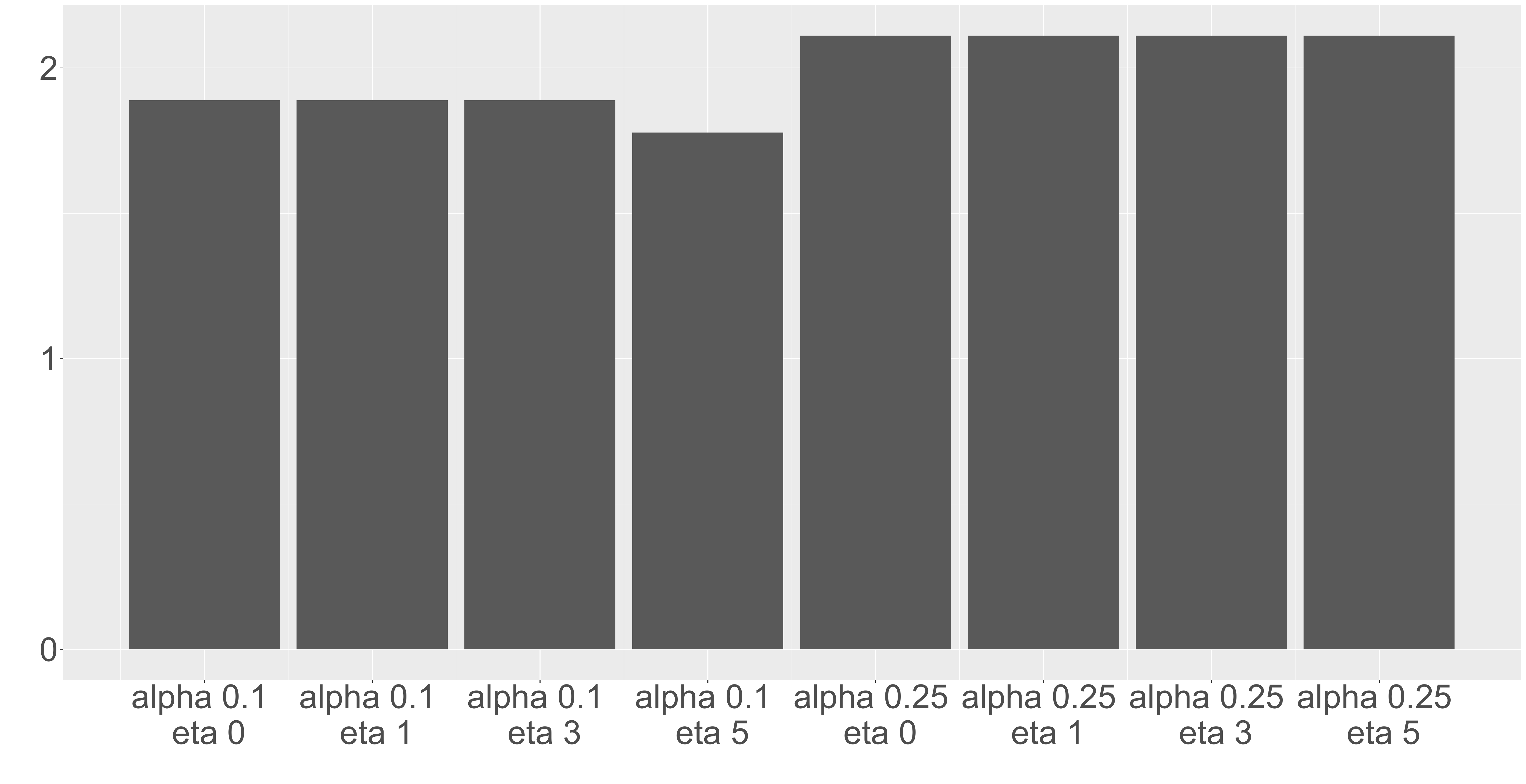}  & \includegraphics[width=.33\textwidth]{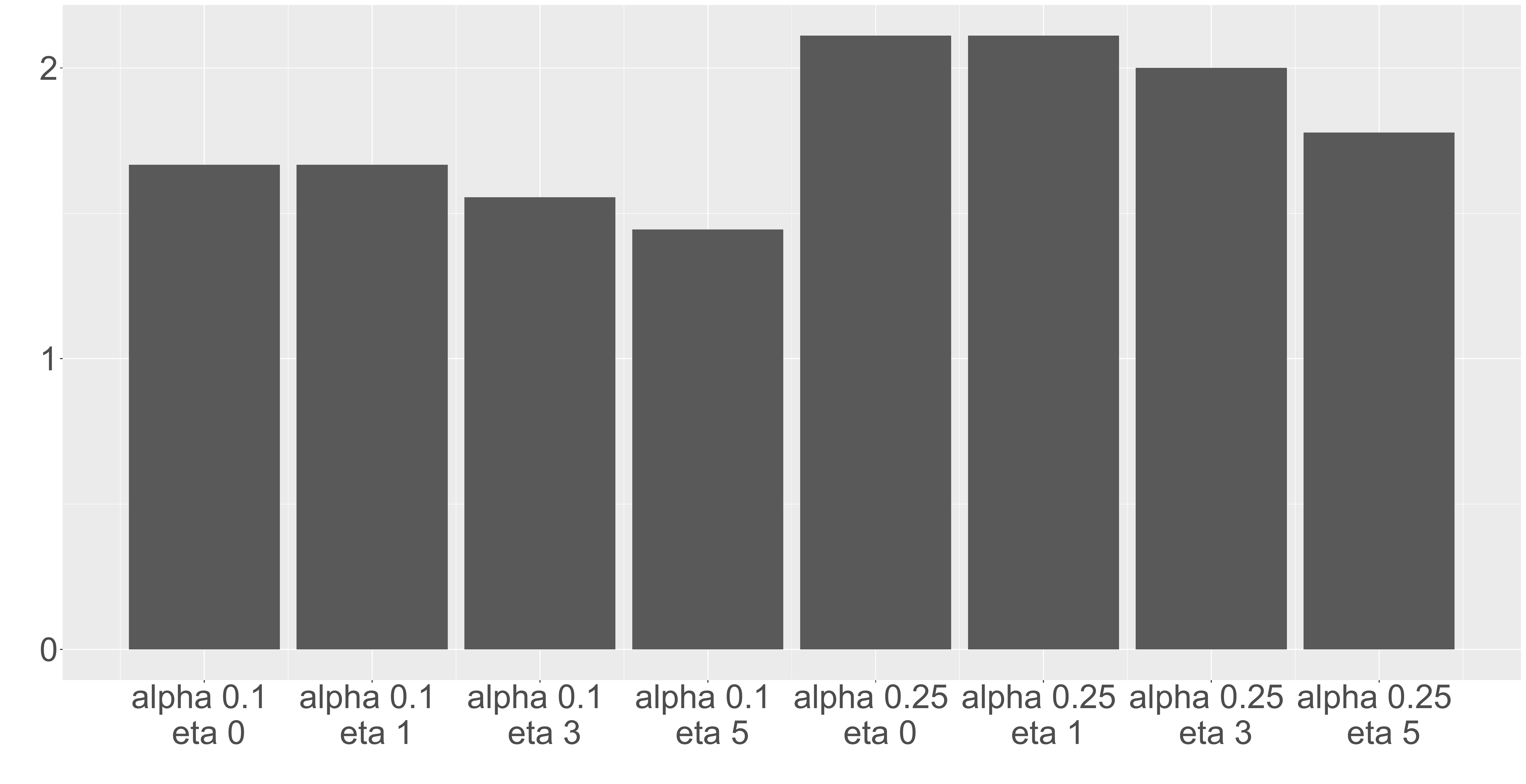}\vspace{-2mm}\\
{\footnotesize (a) Erd\"os-R\'eny social graph} & {\footnotesize (b) Power-law \mbox{$\delta\!=\!0.05$} social graph} & {\footnotesize (c) Power-law \mbox{$\delta\!=\!0.1$} social graph}
\end{tabular}
\captionsetup{justification=justified,singlelinecheck=off,font={stretch=0.7}}
\vspace{-4mm}\caption{\footnotesize Synthetic data: execution time of the proposed \us\ method (milliseconds), by varying the $\alpha$ and $\eta$ parameters and the social graph ($|\A| = 1000$, noise~$5\%$, BIC regularizator).\vspace{-4mm}} \label{fig:time_simulations}
\end{figure*}

We generate synthetic data for $100$ independent runs. 
In each run we generate social graphs of $n=100$ nodes, of the form of either Erd\"os-R\'eny  (with density $\delta$ uniformly sampled from the range $[0.05, 0.1]$), or power-law (with density $\delta$ being either $0.05$ or $0.1$).  
We partition each randomly generated social network into $k=10$ clusters of $\mbox{card}_{min} = 8$, $\mbox{card}_{max} = 12$ and $\mbox{card}_{overlap} = 10$ nodes ($1\%$). 
For each cluster we generate causal \DAGs\ of density $\delta_{G_{cause}}$ uniformly sampled from the range $[0.35,0.5]$, and $p_{min},p_{max}$ uniformly sampled from $(0,1)$. 

We consider both a noise-free model and a noisy model.
As for the former, we assume a perfect regularity in the actions of directly connected users: all the actions of a child user follow those of one of its parents, thus letting us  constrain the induced distribution of the generating \DAG as follows. 
With $u$ being a user in the network and $P(u)$ being all the users with arcs pointing to $u$ (i.e., $u$'s parent set), the probability of observing any action from $u$ is $0$ if none of $u$'s parents had performed an action before (i.e., $P(u)$ are independently influencing $u$).
As far as the noisy model, we consider probabilities $e_+, e_- \in (0,1)$ of false-positive and false-negative observations in each item, respectively, i.e., the probability of adding or removing from $\A$ a certain triple $\langle v, \phi, t \rangle$, regardless of how it was sampled from the underlying causal process.
Note that these two sources of noise aim at modeling imperfect regularities in the causal phenomena in terms of either false positive/negative observations or noise in the orderings among the events.
We ultimately generate our noisy datasets with a probability of a noisy entry of either $5\%$ or $10\%$ per entry, with any noisy entry having uniform probability of being either a false positive or a false negative.

Given the above settings, we sample observation sets at different sizes, i.e., $|\A|\!=\!500$, $|\A|\!=\!1000$, and $|\A|\!=\!5000$.
The observations are sampled across the $k=10$ groups in such a way that the size of each group is guaranteed to be within $\frac{|\A|}{k} \times (1 \pm [0,0.5])$, e.g., for size $|\A|= 500$, every cluster will have between $25$ and $75$ traces.

\begin{table}[t]
\vspace{-3mm}
\centering
\footnotesize
\captionsetup{justification=justified,singlelinecheck=off,font={stretch=0.7}}
\vspace{3mm}
\caption{{ \footnotesize Synthetic data: performance of the proposed \us\ method vs. the baseline, by varying the $\alpha$ and $\eta$ parameters, on the Erd\"os-R\'eny social graph ($|\A| = 1000$, noise $5\%$, BIC regularizator).}}
\vspace{-4mm}
\begin{tabular}{@{}c@{ }@{  }r@{ }|cccc|cccc@{}}
& \multicolumn{1}{c}{} & \multicolumn{4}{c}{$\alpha = 0.1$} & \multicolumn{4}{|c}{$\alpha = 0.25$}\\
\cline{3-10}
& \multicolumn{1}{c}{} & $\eta\!=\!0$ & $\eta\!=\!1$ & $\eta\!=\!3$ & $\eta\!=\!5$ & $\eta\!=\!0$ & $\eta\!=\!1$ & $\eta\!=\!3$ & $\eta\!=\!5$\\
\hline
\multirow{2}{*}{$accuracy$} & \us & 0.932 &	0.932 &	0.933 &	0.934 &	0.933 &	0.933 &	0.933 &	0.934\\
& \baseline &  0.765 &	0.767 &	0.786 &	0.793 &	0.773 &	0.774 &	0.791 &	0.799\\
\hline
\multicolumn{2}{c|}{NMI} & 0.67 &	0.669 &	0.674 &	0.677 &	0.671 &	0.671 &	0.675 &	0.679\\

\hline
\end{tabular}
\label{tab:exp1-erdos}
\end{table}

\begin{table}[t]
\centering
\footnotesize
\captionsetup{justification=justified,singlelinecheck=off,font={stretch=0.7}}
\vspace{3mm}
\caption{{ \footnotesize Synthetic data: performance of the proposed \us\ method vs. the baseline, by varying the $\alpha$ and $\eta$ parameters, on the power-law $\delta\!=\!0.05$ social graph ($|\A| = 1000$, noise $5\%$, BIC regularizator).}}
\vspace{-4mm}
\begin{tabular}{@{}c@{ }@{  }r@{ }|cccc|cccc@{}}
& \multicolumn{1}{c}{} & \multicolumn{4}{c}{$\alpha = 0.1$} & \multicolumn{4}{|c}{$\alpha = 0.25$}\\
\cline{3-10}
& \multicolumn{1}{c}{} & $\eta\!=\!0$ & $\eta\!=\!1$ & $\eta\!=\!3$ & $\eta\!=\!5$ & $\eta\!=\!0$ & $\eta\!=\!1$ & $\eta\!=\!3$ & $\eta\!=\!5$\\
\hline
\multirow{2}{*}{$accuracy$} & \us & 0.979 & 0.979 & 0.979 & 0.98 & 0.979 & 0.978 & 0.979 & 0.98\\
& \baseline & 0.938 & 0.938 & 0.942 & 0.944 & 0.938 & 0.938 & 0.941 & 0.945\\
\hline
\multicolumn{2}{c|}{NMI} & 0.563	& 0.563	& 0.563	& 0.563	& 0.563	& 0.563	& 0.563 &	0.563\\
\hline
\end{tabular}
\label{tab:exp1-powerlaw0.05}
\end{table}

\begin{table}[t]
\centering
\footnotesize
\captionsetup{justification=justified,singlelinecheck=off,font={stretch=0.7}}
\vspace{3mm}
\caption{{ \footnotesize Synthetic data: performance of the proposed \us\ method vs. the baseline, by varying the $\alpha$ and $\eta$ parameters, on the power-law $\delta\!=\!0.1$ social graph ($|\A| = 1000$, noise $5\%$, BIC regularizator).}}
\vspace{-4mm}
\begin{tabular}{@{}c@{ }@{  }r@{ }|cccc|cccc@{}}
& \multicolumn{1}{c}{} & \multicolumn{4}{c}{$\alpha = 0.1$} & \multicolumn{4}{|c}{$\alpha = 0.25$}\\
\cline{3-10}
& \multicolumn{1}{c}{} & $\eta\!=\!0$ & $\eta\!=\!1$ & $\eta\!=\!3$ & $\eta\!=\!5$ & $\eta\!=\!0$ & $\eta\!=\!1$ & $\eta\!=\!3$ & $\eta\!=\!5$\\
\hline
\multirow{2}{*}{$accuracy$} & \us & 0.966	& 0.965	& 0.967	& 0.968	& 0.965	& 0.965	& 0.967	& 0.968\\
& \baseline & 0.882	& 0.882	& 0.888	& 0.892	& 0.883	& 0.882	& 0.887	& 0.893\\
\hline
\multicolumn{2}{c|}{NMI} & 0.63	& 0.63	& 0.631	& 0.631	& 0.63	& 0.63	& 0.63	& 0.63\\
\hline
\end{tabular}
\label{tab:exp1-powerlaw0.1}
\end{table}

\begin{table*}[t]
\vspace{-6mm}
\centering
\footnotesize
\captionsetup{justification=justified,singlelinecheck=off,font={stretch=0.7}}
\vspace{3mm}
\caption{{ \footnotesize Synthetic data: performance of the proposed \us\ method vs. the baseline, by varying the size $|\A|$ of input observations ($\alpha\!=\!0.1$, $\eta\!=\!1$, noise $5\%$, BIC regularizator).}}
\vspace{-6mm}
\begin{tabular}{@{}c@{ }@{  }r@{ }|ccc|ccc|ccc@{}}
& \multicolumn{1}{c}{} & \multicolumn{3}{c}{Erd\"os-R\'eny} & \multicolumn{3}{|c}{Power-law $\delta\!=\!0.05$} & \multicolumn{3}{|c}{Power-law $\delta\!=\!0.1$}\\
\cline{3-11}
& \multicolumn{1}{c}{} & $|\A|\!=\!500$ & $|\A|\!=\!1000$ & $|\A|\!=\!5000$ & $|\A|\!=\!500$ & $|\A|\!=\!1000$ & $|\A|\!=\!5000$ & $|\A|\!=\!500$ & $|\A|\!=\!1000$ & $|\A|\!=\!5000$\\
\hline
\multirow{2}{*}{$accuracy$} & \us  &  0.939  &  0.932  &  0.909  &  0.983  &  0.979  &  0.963  &  0.974  &  0.965  &  0.938\\
& \baseline &  0.815  &  0.767  &  0.585  &  0.95  &  0.938  &  0.913  &  0.904  &  0.882  &  0.835\\
\hline
\multicolumn{2}{c|}{NMI}   &  0.669  &  0.662  &  0.662  &  0.563  &  0.567  &  0.571  &  0.630  &  0.636  &  0.642\\
\hline
\end{tabular}
\label{tab:exp2}
\end{table*}

\begin{table*}[t]
\vspace{-2mm}
\centering
\footnotesize
\captionsetup{justification=justified,singlelinecheck=off,font={stretch=0.7}}
\vspace{3mm}
\caption{{ \footnotesize Synthetic data: performance of the proposed \us\ method vs. the baseline, by varying the noise level ($\alpha\!=\!0.1$, $\eta\!=\!1$, $|\A|\!=\!1000$, BIC regularizator).}}
\vspace{-4mm}
\begin{tabular}{@{}r@{ }|c@{ }@{ }cc@{ }@{ }cc@{ }@{ }cc@{ }@{ }c|c@{ }@{ }cc@{ }@{ }cc@{ }@{ }cc@{ }@{ }c@{}}
 \multicolumn{1}{c}{} & \multicolumn{8}{c}{$\alpha = 0.1$} & \multicolumn{8}{|c}{$\alpha = 0.25$}\\
\cline{2-17}
 \multicolumn{1}{c}{} &  \multicolumn{2}{c}{$\eta\!=\!0$} & \multicolumn{2}{c}{$\eta\!=\!1$} & \multicolumn{2}{c}{$\eta\!=\!3$} & \multicolumn{2}{c|}{$\eta\!=\!5$} & \multicolumn{2}{c}{$\eta\!=\!0$} & \multicolumn{2}{c}{$\eta\!=\!1$} & \multicolumn{2}{c}{$\eta\!=\!3$} & \multicolumn{2}{c}{$\eta\!=\!5$}\\
\cline{2-17}
 \multicolumn{1}{c}{} & BIC & AIC & BIC & AIC & BIC & AIC & BIC & AIC & BIC & AIC & BIC & AIC & BIC & AIC & BIC & AIC\\
\hline
 $accuracy$ & 0.979  &  0.971  &  0.979  &  0.971  &  0.979  &  0.972  &  0.98  &  0.973  &  0.979  &  0.971  &  0.978  &  0.971  &  0.979  &  0.972  &  0.98  &  0.973\\
\hline
\end{tabular}
\label{tab:exp3}
\end{table*}

\begin{table*}[t]
\vspace{-2mm}
\centering
\footnotesize
\captionsetup{justification=justified,singlelinecheck=off,font={stretch=0.7}}
\vspace{3mm}
\caption{{ \footnotesize Synthetic data: performance of the proposed \us\ method with varying the regularizator, i.e., BIC vs. AIC ($|\A| = 1000$, noise $5\%$, power-law $\delta\!=\!0.05$ social graph).}}
\vspace{-6mm}
\begin{tabular}{@{}c@{ }@{  }r@{ }|ccc|ccc|ccc@{}}
& \multicolumn{1}{c}{} & \multicolumn{3}{c}{Erd\"os-R\'eny} & \multicolumn{3}{|c}{Power-law $\delta\!=\!0.05$} & \multicolumn{3}{|c}{Power-law $\delta\!=\!0.1$}\\
\cline{3-11}
& \multicolumn{1}{c}{} & no noise & noise $5\%$ & noise $10\%$ & no noise & noise $5\%$ & noise $10\%$ & no noise & noise $5\%$ & noise $10\%$\\
\hline
\multirow{2}{*}{$accuracy$} & \us  & 0.936  &  0.932  &  0.93  &  0.98  &  0.979  &  0.978  &  0.967  &  0.965  &  0.964 \\
& \baseline &  0.839  &  0.767  &  0.713  &  0.941  &  0.938  &  0.936  &  0.887  &  0.882  &  0.878 \\
\hline
\multicolumn{2}{c|}{NMI}   &  0.686  &  0.669  &  0.661  &  0.563  &  0.563  &  0.563  &  0.63  &  0.63  &  0.63\\
\hline
\end{tabular}
\vspace{-0mm}
\label{tab:exp4}
\end{table*}

\spara{Assessment criteria.}
We use the following metrics:
\begin{itemize}
\item {\em Causal topology.} \ We assess how well the causal structure ultimately inferred by our method reflects the generative model.
For this assessment we resort to the traditional Hamming distance.
Specifically, we first build the union graph of the causal \DAGs of each cluster to obtain the graph of all the true causal claims. 
We do this for both the generative \DAGs (i.e., the ground-truth ones) and the inferred ones (i.e., the ones yielded by the proposed method or the baseline). 
Then, we compute the Hamming distance from the ground-truth and the inferred structures, i.e., we count the minimum number of substitutions required to remove any inconsistency from the output topologies, when compared to the ground-truth ones. 
We ultimately report the performance in terms of $accuracy = \frac{(TP + TN)}{(TP + TN + FP + FN)}$, with $TP$ and $FP$ being the arcs recognized as true and false positives, respectively, and $TN$ and $FN$ being the arcs recognized as true and false negatives, respectively.
\item {\em Partitioning.} \ We also estimate how well our partitioning step can effectively group users involved in the same causal process. 
To this end, we measure the similarity between the clusters identified by our method and the ground-truth clusters by means of the well-established Normalized Mutual Information (NMI) measure~\cite{danon2005comparing}.
\end{itemize}

\spara{Results.}
We compare the performance of the proposed two-step \textsf{Probabilistic Social Causation} method (for short, \us) against a baseline \baseline\ that only performs the first one of the two steps of the \us\ algorithm (i.e., only the grouping step as described in Algorithm~\ref{alg:proposed-sampling}), and reconstruct a \DAG $G_D(\mathcal{D})$ from the prima-facie graph $G(\mathcal{D})$ of every group $\mathcal{D} \in \mathbf{D}^*$ outputted by that step, without learning the minimal causal topology.
All results are averaged over the 100 data-generation runs performed, and, unless otherwise specified, they refer to $K\!=\!100$ (for the proposed \us\ method).

First, we evaluate the execution time of the proposed \us\ method by varying the algorithm parameters (i.e., $\alpha$ and $\eta$), and the type of social graph underlying the generated data. 
The results of this experiment are reported in Figure~\ref{fig:time_simulations}. 
As depicted in the figure, the running time of the proposed method is in the order of a few milliseconds.
Also, the different values of $\alpha$ and $\eta$, as well as the form of the social graph, do not seem to have a significant impact on the execution time.

Shifting the attention to effectiveness, Tables~\ref{tab:exp1-erdos}--\ref{tab:exp1-powerlaw0.1} report the performance of the proposed \us\ method vs. the \baseline\ baseline, by varying the algorithm parameters ($\alpha$ and $\eta$), on the various social graphs.
In all cases the $accuracy$ of the proposed \us\ is evidently higher than the one of the baseline.
The performance is rather independent of the algorithm parameters or the social graph.
In terms of NMI (which is the same for both \us\ and \baseline, as it concerns the first step of the proposed method that is common to \us\ and \baseline),  the performance is in the range $[0.56, 0.68]$, which is a fair result considering the difficulty of the subtask of recognizing the exact ground-truth cluster structure.

Table~\ref{tab:exp2} reports on the performance by varying the number of sampled observations.
As expected, the trends follow the common intuition: the performance of both \us\ and \baseline\ decreases as the number of observations increases.
However, a major result to remark here is that the advantage of the proposed \us\ over \baseline\ gets higher with the increasing observations.
This attests to the effectiveness of our method even for large observation-set sizes.

The last experiments we focus on are on the impact of  the regularizator and the noise level on the performance of \us.
The results of these experiments are shown in Table~\ref{tab:exp3}~and~\ref{tab:exp4}, respectively.
As far as the former, BIC is recognized as slightly more accurate than AIC.
As for the noise level, we observe only a slight decrease of the performance of the proposed \us\ as the noise level increases, which attests to the high robustness of the proposed method.

\subsection{Real data} \label{sec:real_data}

We also experiment with three real-world datasets, whose main characteristics are summarized in Table~\ref{tab:dataset}. 

\begin{table}[h]
\vspace{-2mm}
\centering
\footnotesize
\captionsetup{justification=justified,singlelinecheck=off,font={stretch=0.7}}
\caption{{ \footnotesize Characteristics of real data: number of observations ($|\A|$); number of propagations/\DAGs\ ($|\D|$); nodes ($|V|$) and arcs ($|A|$) of the social graph $G$; min, max, and avg number of nodes in a \DAG\ of $\D$ ($n_{min}$, $n_{max}$, $n_{avg}$);  min, max, and avg number of arcs in a \DAG\ of $\D$ ($m_{min}$, $m_{max}$, $m_{avg}$).}\vspace{-3mm}}
\begin{tabular}{@{}c@{}@{  }|r@{}@{  }r@{}@{  }r@{}@{  }r@{}@{  }|r@{}@{  }r@{}@{  }r@{}@{  }r@{}@{  }r@{}@{  }r@{}}
 \multicolumn{1}{c}{} & $|\A|$ & $|\D|$ & $|V|$ & $|A|$ & $n_{min}$ & $n_{max}$ & $n_{avg}$  & $m_{min}$ & $m_{max}$  & $m_{avg}$\\
\hline
\lastfm  &  1\,208\,640  &  51\,495  &   1\,372 &  14\,708 & 6 & 472 & 24 & 5 &  2\,704 & 39\\
\twitter  &  580\,141  &  8\,888  &  28\,185  &  1\,636\,451  & 12 &  13\,547  & 66 & 11 &  240\,153  & 347\\
\flixster  &  6\,529\,012  &  11\,659  &  29\,357  &  425\,228  & 14 &  16\,129  & 561 & 13 &  85\,165  &  1\,561 \\
\hline
\end{tabular}
\vspace{-2mm}
\label{tab:dataset}
\end{table}

\lastfm\ (\texttt{www.last.fm}).
Lastfm is a music website, where users listen to their favorite tracks and communicate with each other. The dataset was created starting from the \textit{HetRec 2011 Workshop} dataset available at \url{www.grouplens.org/node/462}, and enriching it by crawling. The graph $G$ corresponds to the friendship network of the service. The entities in $\E$ are the songs listened to by the users. An observation $\langle u,\phi, t \rangle \in \A$ means that the first time that the user $u$ listens to the song $\phi$ occurs at time $t$.

\twitter\ (\texttt{twitter.com}).
We obtained the dataset by crawling the public timeline of the popular online microblogging service.
The nodes of the graph $G$ are the Twitter users, while each arc $(u,v)$ expresses the fact that $v$ is a follower of $u$. The entities in $\E$ correspond to URLs, while an observation $\langle u,\phi, t \rangle \in \A$ means that the user $u$  (re-)tweets (for the first time) the URL $\phi$ at time $t$.

\flixster\ (\texttt{www.flixster.com}).
Flixster is a social movie site where people can meet each other based on tastes in movies. The graph $G$ corresponds to the social network underlying the site. The entities in $\E$ are movies, and an observation $\langle u, \phi, t\rangle$ is included in $\A$ when the user $u$ rates for the first time the movie $\phi$ with the rating happening at time $t$.
%

%

\begin{figure*}[t]
\centering
\vspace{-4mm}
\begin{tabular}{ccc}
\includegraphics[width=.32\textwidth]{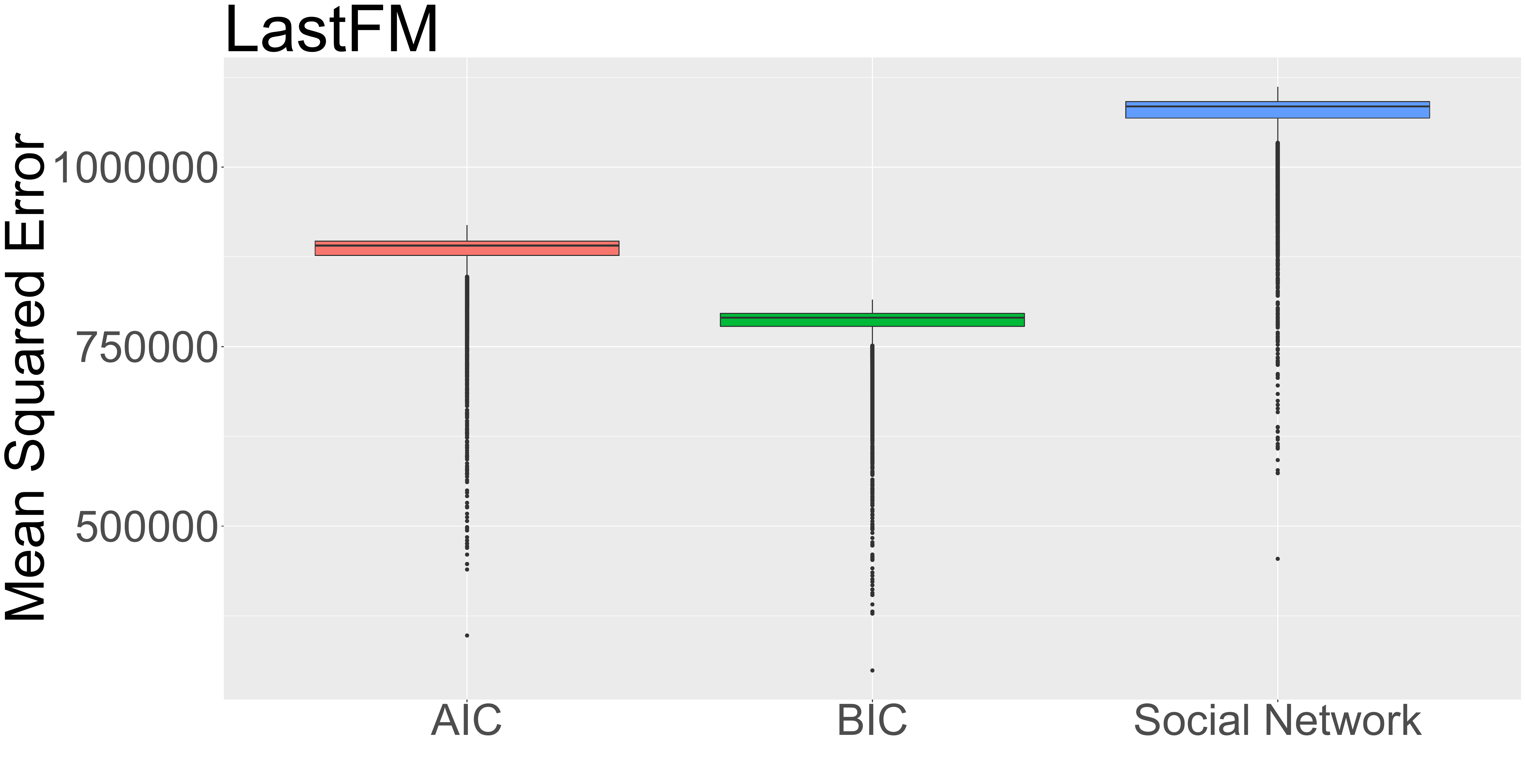} & \includegraphics[width=.32\textwidth]{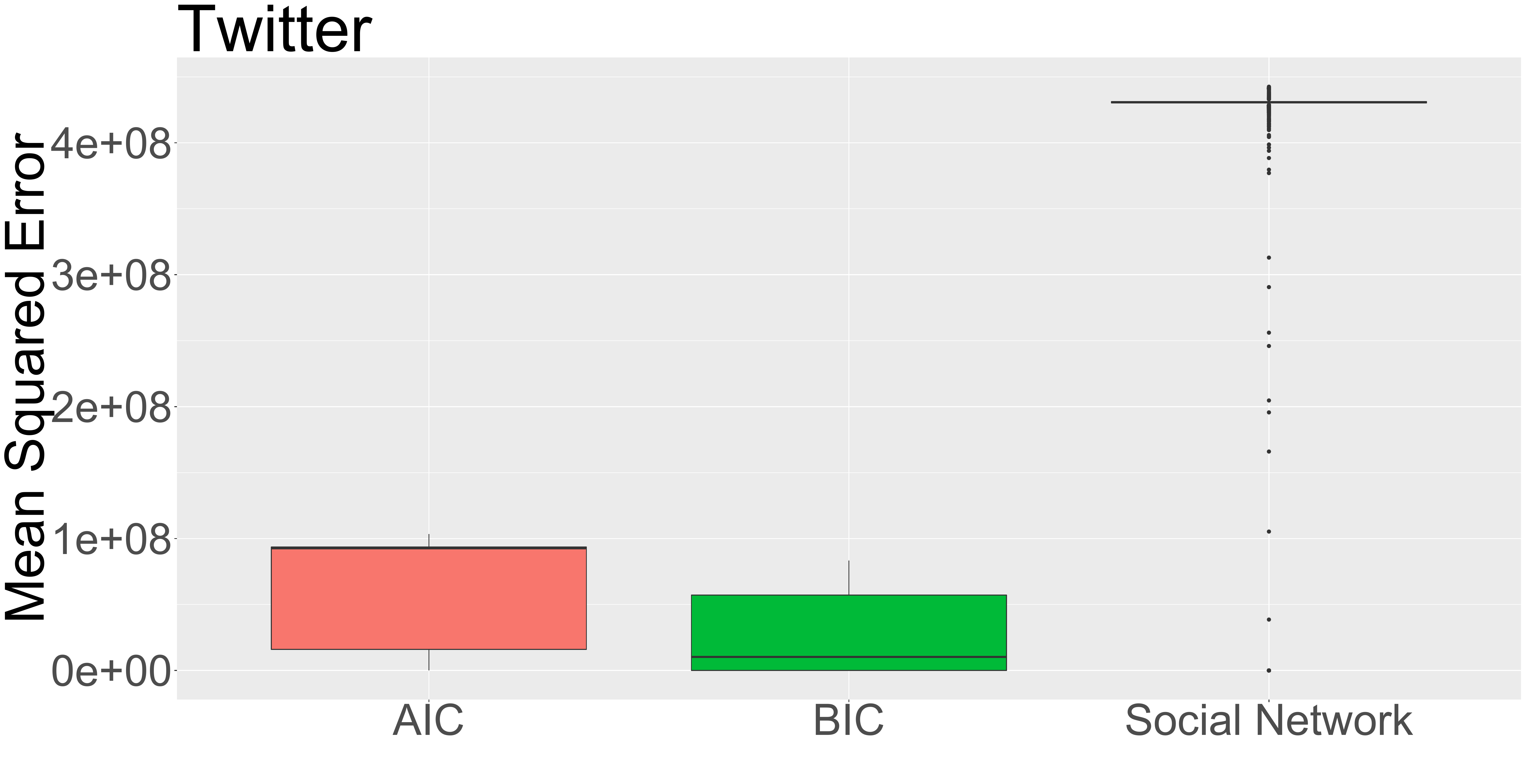}  & \includegraphics[width=.32\textwidth]{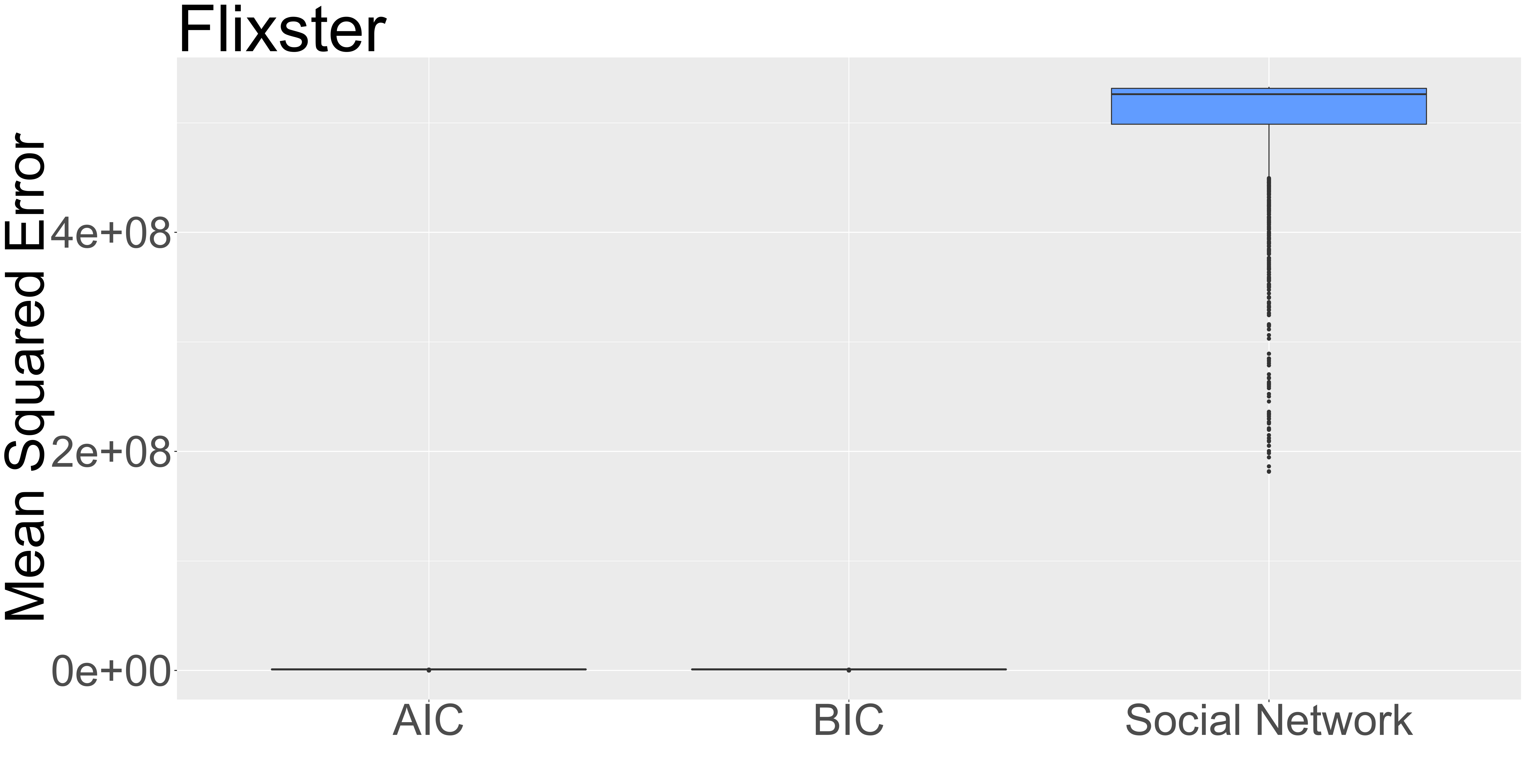}\vspace{-2mm}
\end{tabular}
\captionsetup{justification=justified,singlelinecheck=off,font={stretch=0.7}}
\vspace{-4mm}\caption{\footnotesize Real-data: spread-prediction performance of the proposed \us\ method (equipped with BIC or AIC regularizator) vs. a baseline that considers the whole social graph ($\alpha=0.2$, $\eta=5$).\vspace{-0mm}} \label{fig:mse}
\vspace{-3mm}
\end{figure*}

\smallskip

As real data comes with no ground-truth, here we resort to the well-established \emph{spread-prediction} task to assess the effectiveness of our method. This task aims at predicting the expected number of nodes that eventually get activated due to an information-propagation process initiated in some nodes~\cite{goyal2011data}. Specifically, in our experiments we consider the well-established propagation model defined by Goyal~{\em et~al.}~\cite{goyal2011data}, which takes as input a graph (representing relationships among some users) and a database of propagations (representing actions taken by those users), and learns a model that is able to predict the influence spread.
In our context we (randomly) split our input propagations into training set and test set (70\% vs. 30\%), and use the training set to learn the Goyal~{\em et~al.}'s model.
As a graph, we consider both the ultimate causal structure computed by our \us\  method, and the whole input social graph.
The latter constitutes a baseline to evaluate the effectiveness of  \us: the rationale is that providing the Goyal~{\em et~al.}'s model with a graph corresponding to the causal structure recognized by our method, instead of the whole social network, is expected to improve the performance of the spread-prediction task, as it will not burden the model with noisy relationships.
Once the Goyal~{\em et~al.}'s model has been learned, we use it to predict spread, and measure the accuracy of the experiment in terms of \emph{mean squared error} (MSE) between the predicted spreads and the real ones exhibited by the test set.

The results of this experiment are shown in Figure~\ref{fig:mse}, where for our \us\ method we set $\alpha = 0.20$, $\eta = 5$, and $K = 500$, while for both \us\ and the baseline we perform spread prediction by running $10,000$ Monte Carlo simulations (as suggested in~\cite{goyal2011data}).
The figure shows that, on all datasets, the error is consistently smaller when our method is used compared to when the original social network is given as input to the spread-prediction algorithm.
This finding hints at a promising capability achievable by the proposed method in extracting the real causal relations within the social network.

\section{Conclusions}
\label{sec:conclusions}

In this paper we tackle the problem of deriving causal \DAGs that are well-representative of the social-influence dynamics underlying an input database of propagation traces.
We devise a principled two-step methodology that is based on  Suppes' probabilistic-causation theory.
The first step of the methodology aims at partitioning the input set of propagations, mainly to get rid of the Simpson's paradox, while the second step derives the ultimate minimal causal topology via constrained MLE.
Experiments on synthetic data attest to the high accuracy of the proposed method in detecting ground-truth causal structures, while experiments on real data show that our method performs well in a task of spread prediction.

\balance
\bibliographystyle{abbrv}
\bibliography{references,propagation}

\end{document}